\date{}
\newtheorem{definition}{Definition}
\newtheorem{remark}[definition]{Remark} 
\newtheorem{theorem}[definition]{Theorem} 
\newtheorem{proposition}[definition]{Proposition} 
\newtheorem{corollary}[definition]{Corollary}
\newtheorem{lemma}[definition]{Lemma}
\definecolor{light-gray}{gray}{0.75}
\begin{document}
		
	\title{String Compression in FA--Presentable Structures}
	\author{Dmitry  Berdinsky
		\thanks{Mahidol University, Faculty of Science, Department of Mathematics and 
			Centre of Excellence in Mathematics, CHE,  
			Bangkok 10400, Thailand; e-mail: berdinsky@gmail.com} \and
		 Sanjay Jain
		 \thanks{Department of Computer Science, National University of Singapore, Singapore 117417, Republic
		 	of Singapore; e-mail: sanjay@comp.nus.edu.sg} 
		 \and  
		Bakhadyr Khoussainov
		\thanks{University of Electronic Science and Technology of China, Chengdu, P.R.China; e-mail: bmk@uestc.edu.cn}
	    \and  
		 Frank Stephan
		\thanks{Department of Mathematics, National University of Singapore, 10 Lower Kent Ridge Road, Singapore 119076, Republic of Singapore; e-mail: fstephan@comp.nus.edu.sg}
	    }
	\date{}
	\maketitle {\small
		\begin{quote}
			\noindent{\bf Abstract}\,\,\,\,   
		        
We construct a FA--presentation 
$\psi: L \rightarrow \mathbb{N}$ of the structure $(\mathbb{N};\mathrm{S})$ 
for which a numerical characteristic 
$r(n)$ defined as the maximum 
number $\psi(w)$ for all strings $w \in L$ of length less than or equal to $n$ 
grows faster than any tower of exponents 
of a fixed height. This result leads us 
to a more general notion of a compressibility 
rate defined for FA--presentations of any FA--presentable 
structure. We show the existence of 
FA--presentations for the configuration space 
of a Turing machine and Cayley graphs of 
some groups for which it grows faster than any 
tower of exponents of a fixed height. 
For FA--presentations of 
the Presburger arithmetic 
$(\mathbb{N};+)$ we show that it is bounded
from above
by a linear function.

			\noindent{\bf Keywords}\,\,\,\, 
		     FA--presentation, FA--presentable structure,  
		     successor function,
		     Presburger arithmetic, 
		     compressibility
		     rate 
		\end{quote}	

\section{Introduction}
   
   A \emph{FA--presentable 
   structure} 
   is a relational structure 
   $\mathcal{A}= (D; R_1,\dots,R_k)$ admitting presentations by finite automata. 
   In brief, for a FA--presentable structure 
   $\mathcal{A}$
   there should exist a 
   surjective map $\psi : L \rightarrow D$ between some regular language 
   $L$ and the domain $D$ of the structure $\mathcal{A}$ such that each relation $R_i$, $i=1,\dots,k$ is recognized by 
   a multi--tape synchronous finite automaton
   and the equality relation 
   $\{(u,v) \in L^2 \, : \, \psi(u) = \psi(v) \}$
   is recognized by a two--tape synchronous
   automaton. 
   The language $L$ can be thought of as a language of normal forms (not necessarily unique) for elements of $D$. 
   The map $\psi: L \rightarrow D$ is called a 
   FA--presentation of the structure $\mathcal{A}$. 
   
   FA--presentable structures are 
   often referred to as 
   \emph{automatic structures} in the literature. 
   The term automatic structure
   is also used in the theory of automatic groups 
   \cite{Epsteinbook}, but with the different meaning. 
   In order to avoid misinterpretation, in this paper we  use the 
   term FA--presentable structure.    
   The field  of FA--presentable structures  can be  
   traced back to the pioneering works by Hodgson 
   \cite{Hodgson82,Hodgson83}. The systematic study 
   of FA--presentable structures was initiated independently 
   by Khoussainov and Nerode \cite{KhoussainovNerode95} and 
   Blumensath and Gr\"{a}del \cite{BlumensathGradel00,BlumensathGradel04}.
   For survey articles in FA--presentable structures the reader is referred 
   to \cite{Gradel_survey_2020,khoussainov2007three,Rubin_survey08,Stephan2015}.

   Each FA--presentable structure admits infinitely many 
   FA--presentations which can differ from each other significantly 
   or may exhibit unexpected behaviour
   compared to natural FA--presentations. For example, in \cite{Akiyama08}  the 
   authors construct a FA--presentation of $(\mathbb{Z}[1/p];+)$ 
   for which the subgroup of integers $\mathbb{Z} \leqslant \mathbb{Z}[1/p]$
   is not regular and in \cite{NiesSemukhin07} 
   the authors construct FA--presentations of $(\mathbb{Z}^2;+)$ 
   for which none of the cyclic subgroups is regular.  
        
   In this paper we look at FA--presentations from 
   a numerical perspective.  We define a numerical characteristic of a FA--presentation $\psi: L \rightarrow \mathbb{N}$
   of the structure $(\mathbb{N};\mathrm{S})$ as follows. 
   Let $r(n)$ be the maximum $\psi(w)$ for all strings $w \in L$ of length less than or equal to $n$. 
   For example, for a unary presentation of 
   $(\mathbb{N};\mathrm{S})$ the function $r(n)$
   has a linear growth while  
   for a binary presentation it grows like
   an exponential function.              
    For infinitely many positive integers 
    $n$, those for which $r(n-1) < r(n)$,  the value $\frac{r(n)}{n}$ can be thought 
    of as a compression ratio -- for these integers $n$
    the number $r(n)$ is represented by a string of 
    length exactly $n$.    
    We first notice that for each FA--presentation of 
    the Presburger arithmetic $(\mathbb{N};+)$ the growth 
    of $r(n)$ is at most exponential, see Lemma
    \ref{N_plus_exp_growth}. 
    Then we show that in general it is not true for FA--presentations 
    of $(\mathbb{N};\mathrm{S})$ which comprise all 
    FA--presentations of $(\mathbb{N};+)$.     
    Namely, we construct a FA--presentation of 
    $(\mathbb{N};\mathrm{S})$ for which $r(n)$  
    grows at least as fast as the function $T(n)$       
   defined recursively by the identity 
   $T(n+1)=2^{T(n)}$ for $n \geqslant 0$
   and the initial condition 
   $T(0)=1$, see Theorem \ref{r_grows_faster_than_T}. In particular, $r(n)$ grows faster than any 
   tower of exponents of a fixed height
   (see Corollary 
   \ref{r_grows_faster_t_h_cor})\footnote{The original motivation for considering  
   such FA--presentations of 
   $(\mathbb{N};\mathrm{S})$ came from 
   the study of a so--called 
   Cayley distance function 
   \cite{BT18,BET21_IJAC}
   defined for FA--presentations 
   of Cayley graphs of Cayley automatic groups~\cite{KKM11}. 
   In particular, 
   Corollary 
   \ref{r_grows_faster_t_h_cor}
   implies
   the existence of a FA--presentation
   of a Cayley graph for which 
   the Cayley distance function   
   grows faster than any tower of exponents \cite[Remark~6.3]{eastwest19}.}. 
   
   These results lead to a 
   natural notion of 
   a compressibility rate $s(n)$
   of one FA--presentation 
   $\psi : L \rightarrow D$ 
   relative to another 
   $\psi_0 : L_0 \rightarrow D$
   for any given FA--presentable 
   structure $\mathcal{A}$ with the domain $D$.  
   The function $s(n)$ is defined
   as the maximal length of a shortest 
   normal form with respect to 
   $\psi_0$ for elements of 
   the domain $D$ having normal 
   forms of length less than 
   or equal to $n$ with respect
   to $\psi$, see Definition 
   \ref{comprate}. Then Theorem 
   \ref{r_grows_faster_than_T} 
   means that there exists 
   a FA--presentation of 
   $(\mathbb{N};\mathrm{S})$ for which 
   the compressibility rate 
   $s(n)$ relative to a unary 
   presentation of $(\mathbb{N};\mathrm{S})$ grows at least as fast as the function $T(n)$. We give more examples of 
   FA--presentable structures, including 
   the configuration spaces of one--tape Turing machines and Cayley graphs, for which
   there are FA--presentations $\psi_0$ 
   and $\psi$ such that the compressibility rate of $\psi$ relative to $\psi_0$  
   grows at least as fast as 
   the function $T(n)$. 
   However, for the Presburger arithmetic
   $(\mathbb{N};+)$ 
   we show that the compressibility 
   rate is always bounded from above 
   by a linear function, see Theorem
   \ref{incomp_thm}.

   The rest of the paper is organized as follows. In Section \ref{preliminaries_sec} we  
   briefly recall necessary definitions from  the field of FA--presentable structures. 
   In Section \ref{compr_nat_numbers_sec} 
   we discuss a numerical  characteristic $r(n)$  
   for FA--presentations of 
   $(\mathbb{N};\mathrm{S})$ and 
   construct a FA--presentation 
   for which $r(n)$ grows at least as 
   fast as the function $T(n)$. 
   In Section \ref{comp_rate_section}  
   we introduce a more general notion 
   of compressibility rate $s(n)$ for 
   FA--presentations of any 
   FA--presentable structure and 
   show that $s(n)$ is bounded from 
   above by a linear function for
   FA--presentations of 
   the Presburger arithmetic $(\mathbb{N};+)$.  
   In Sections 
   \ref{compr_config_tm_sec} and 
   \ref{compr_elem_of_groups} we
   show examples of 
   FA--presentations for the 
   configuration space of a Turing 
   machine and Cayley graphs of 
   some Cayley automatic groups for which 
   the compression rate grows at 
   least as fast as the function 
   $T(n)$.      
   Section \ref{conc_sec} concludes the 
   paper.
   
\section{Preliminaries} 
\label{preliminaries_sec}

  In this section we recall necessary 
  definitions and notations from 
  the field of FA--presentable structures. We assume that the reader is familiar with the basics of finite automata theory. 
    
  Let $\Sigma$ be an alphabet. For a given string 
  $w \in \Sigma^*$ we denote by $|w|$ the length of $w$.  
  We write $\Sigma_\diamond$  
  for the alphabet $\Sigma_\diamond = \Sigma \cup \{\diamond\}$, 
  where the padding symbol $\diamond$ is  not in $\Sigma$. 
  For a $k$--tuple of strings  
  $(w_1, \dots, w_k) \in \Sigma^{*k}$ the convolution 
  $w_1 \otimes \dots \otimes w_k \in \Sigma_\diamond^{k*}$ is 
  a string of length $|w| = \max\{|w_i|\, : \,i=1,\dots,k\}$ defined 
  as follows. For the $j$th symbol $(\sigma_1, \dots, \sigma_k)$ of $w$, 
  the symbol $\sigma_i$ for $i=1,\dots,k$
  is the $j$th symbol of $w_i$ if 
  $j \leqslant |w_i|$ and $\sigma_i = \diamond$, otherwise. 
 
  For a given relation $R \subseteq \Sigma^{*k}$ we denote 
  by $\otimes R$ the language 
  $\otimes R = \{w_1 \otimes \dots \otimes w_k \, : \, (w_1,\dots,w_k) \in R\}
  \subset \Sigma_\diamond^{k*}$.  
  The relation $R$  is called \emph{FA--recognizable}
  if the language 
  $\otimes R$ is regular.
  A FA--recognizable
  relation is also often referred to as 
  an \emph{automatic relation}.
   Alternatively, $R$ can be thought of 
  as a relation recognized by a synchronous $k$--tape finite automaton -- 
  a one--way Turing machine with $k$ input tapes.  
  
  For a $k$--ary function 
  $f: D^k \rightarrow D$ 
  we define 
  the $\mathrm{Graph}\,f $ to be the relation: 
  $$\mathrm{Graph}\,f = 
  \{(a_1,\dots,a_k,f(a_1,\dots,a_k))\, : \, (a_1,\dots,a_k) \in D^k\} \subseteq D^{k+1}.$$
  Similarly, we say that 
  a $k$--ary function 
  $f: D^k \rightarrow D$, where $D \subseteq \Sigma^{*}$, 
  is FA--recognizable   
  if the relation 
  $\mathrm{Graph}\, f$
  is FA--recognizable.
  A FA--recognizable
  	function is also often 
  referred to as an  
  \emph{automatic function}.

  A structure $\mathcal{A} = (D; R_1, \dots, R_\ell, f_1,\dots,f_m)$ 
  consists of a countable domain $D$, relations $R_1, \dots, R_\ell$ and 
  functions $f_1,\dots,f_m$ on $D$. 
  Let $\psi : L \rightarrow D$ be a surjective mapping 
  from  a language $L \subseteq \Sigma^*$ to the domain $D$. 
  For a given relation $R \subseteq D^n$ we denote 
  its preimage with respect to $\psi$ by $\widetilde{R}$: 
  $$
    \widetilde{R} = \{ (w_1,\dots,w_n) \in L^n \, : \, 
    (\psi(w_1),\dots, \psi(w_n)) \in R\}.
  $$ 
  We say that $\psi : L \rightarrow D$ is a FA--presentation 
  of the structure 
  $\mathcal{A}$ if $L$ is a regular language and 
  the relations 
  $\widetilde{R}_1,\dots,\widetilde{R}_\ell$ and 
  $\widetilde{\mathrm{Graph} \, f_1},\dots,\widetilde{\mathrm{Graph}\, f_m}$
  are FA--recognizable and 
  the equality relation $\{(u,v) \in L^2 \, : \, 
  \psi(u)=\psi(v)\}$ is FA--recognizable.
  We say that the structure $\mathcal{A}$ is FA--presentable if 
  it admits a FA--presentation.  FA--presentable 
  structures, for example, include $(\mathbb{N};+)$, $(\mathbb{Z}^n;+)$, the configuration spaces of Turing machines 
  and Cayley graphs of Cayley automatic 
  groups\footnote{Recall that a 
  finitely generated group $G$
  is called Cayley automatic if 
  its Cayley graph $\Gamma(G,S)$  
  for some finite set of generators 
  $S \subset G$ is a FA--presentable structure. 
  A FA--presentation
  $\psi : L \rightarrow G$ 
  of the Cayley graph $\Gamma(G,S)$ 
  is called a Cayley automatic representation of the group $G$.      
  Cayley automatic groups \cite{KKM11} naturally extend the class of automatic groups \cite{Epsteinbook} studied in geometric group theory.}. 
      
\section{Compressing Natural Numbers}
\label{compr_nat_numbers_sec}
 
 In this section we introduce a 
 numerical characteristic $r(n)$ 
 for FA--presentations of the structure
 $(\mathbb{N};\mathrm{S})$. We first
 show that $r(n)$ is bounded from above 
 by an exponential function for each
 FA--presentation of the Presburger 
 arithmetic $(\mathbb{N};+)$. Then 
 we construct a FA--presentation 
 of the structure $(\mathbb{N};\mathrm{S})$ 
 for which $r(n)$ 
 grows faster than any tower of 
 exponents of a fixed height.

 We denote by $\mathbb{N}$ the set of natural numbers which 
 includes zero and by $\mathrm{S}$ a successor 
 function defined on $\mathbb{N}$ by the identity $\mathrm{S}(x) = x+1$.  
 Let $L \subseteq \Sigma^*$ be a language and   
 $\psi : L \rightarrow \mathbb{N}$ be
 a FA--presentation 
 of the structure $(\mathbb{N};\mathrm{S})$.  
 For a given integer $n \geqslant 0$ we 
 define $L^{\leqslant n}$ to be the set of strings of 
 the language $L$ of length less than or equal to $n$: 
 $L^{\leqslant n} = \{ w \in  L \, : \, |w| \leqslant n \}$.

  \begin{definition} 
  \label{def_r_function} 	 
  	 For a given FA--presentation 
  	 $\psi: L \rightarrow \mathbb{N}$ of 
  	 the structure $(\mathbb{N};\mathrm{S})$ 
  	 we denote by $r$ the function 
  	 $r: \mathbb{N} \rightarrow \mathbb{N}$ defined 
  	 by the identities 
  	 $r (n) = \max \{ \psi(w) \, : \,w \in 
  	 L ^{\leqslant n} \}$ if   
  	 $L ^ {\leqslant n} \neq \varnothing$ and 
  	 $r(n)=0$ if $L^{\leqslant n} = \varnothing$.   	   
  \end{definition}	
   The function $r(n)$ is a numerical characteristic 
   of a FA--presentation 
   $\psi : L \rightarrow \mathbb{N}$ showing how large
   the number $\psi(u) \in \mathbb{N}$ can be 
   for a string $u \in L$ of length at most $n$.  
   For given nondecreasing functions 
   $r: \mathbb{N} \rightarrow \mathbb{N}$ and 
   $s: \mathbb{N} \rightarrow \mathbb{N}$   
   we say that $s \geqslant r$  (a function $s$ is greater than or equal to a 
   function $r$) if there exists an integer 
   $N$ for which $s(n) \geqslant r(n)$ 
   for all $n \geqslant N$.   
   The following proposition shows that if  
   $\psi: L \rightarrow \mathbb{N}$ is 
   a FA--presentation of the structure 
   $(\mathbb{N};+)$, then
   $r$ is less than or equal to some exponential 
   function.  
   \begin{lemma}
   \label{N_plus_exp_growth}	
   	  Let $\psi: L \rightarrow \mathbb{N}$ be a
   	  FA--presentation of the structure $(\mathbb{N};+)$.  
   	  There exists a constant 
   	  $\sigma>0$ such that 
   	  the exponential function 
   	  $\sigma^n$ is greater than
   	  or equal to $r(n)$. 
   \end{lemma}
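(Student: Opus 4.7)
The plan has two main ingredients: a pumping argument that bounds representative lengths under addition, and a counting argument on the cardinality of $L^{\leqslant n}$.

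\emph{Step 1 (pumping bound for sums).} Fix a DFA $A$ recognizing the convolution $\otimes \widetilde{\mathrm{Graph}(+)}$ with $N$ states. I will first show that for every $u,v\in L$ there exists $w\in L$ with $\psi(w)=\psi(u)+\psi(v)$ and $|w|\leqslant \max(|u|,|v|)+N$. Since $\psi$ is surjective, pick any $w'\in L$ with $\psi(w')=\psi(u)+\psi(v)$. If $|w'|>\max(|u|,|v|)+N$, then past position $\max(|u|,|v|)$ the convolution $u\otimes v\otimes w'$ consists entirely of symbols of the form $(\diamond,\diamond,\sigma)$, of which there are more than $N$, so some state of $A$ repeats on this tail and the resulting loop can be pumped down to produce a strictly shorter $w''\in L$ with the same $\psi$-value. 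Writing $\ell(m)$ for the minimum length of a representative of $m$, this gives $\ell(a+b)\leqslant \max(\ell(a),\ell(b))+N$.

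\emph{Step 2 (logarithmic bound on $\ell$).} Applying Step~1 with $u=v$ yields $\ell(2a)\leqslant \ell(a)+N$, hence $\ell(2^k)\leqslant \ell(1)+kN$ by induction. For an arbitrary $m$, write $m=\sum_{i\in S}2^i$ in binary and combine the representatives of the powers of two through $|S|-1$ further applications of Step~1 to obtain $\ell(m)\leqslant c\log_2 m+d$ for constants $c,d$ depending only on $N$ and $\ell(1)$. In particular $\{0,1,\dots,M\}\subseteq \psi(L^{\leqslant c\log_2 M+d})$ for every $M\geqslant 1$.

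\emph{Step 3 (closing the bound).} Applying this inclusion with $M=r(n)$ together with the cardinality bound $|L^{\leqslant k}|\leqslant |\Sigma|^{k+1}$ yields
\[
r(n)+1 \;\leqslant\; |\Sigma|^{c\log_2 r(n)+d+1}.
\]
Combined with the doubling lower bound $r(n+N)\geqslant 2r(n)$ coming from Step~1, this self-referential inequality is to be unwound to force $\log_2 r(n)=O(n)$, which gives a constant $\sigma>0$ with $r(n)\leqslant \sigma^n$ for all sufficiently large $n$.

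The main obstacle is Step~3: closing the self-referential inequality into a clean exponential bound. The resolution should go through extracting from Step~1 an independent linear bound on $\log_2 r(n)$, either by a direct pumping/Myhill--Nerode analysis of the regular set $\{u\in L:\psi(u)\leqslant \psi(w)\}$ for a length-$n$ representative $w$ of $r(n)$, or by exploiting the FA-recognizability of predicates definable in $(\mathbb{N};+)$ such as the order $\leqslant$ and parity in order to produce representatives of each $m\leqslant r(n)$ of length $O(n)$.
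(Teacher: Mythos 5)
Your Steps 1--2 bound the \emph{minimum} representative length from above, $\ell(m)\leqslant c\log_2 m+d$, i.e.\ every natural number admits a short encoding. But that is the wrong direction for the lemma: to show $r(n)\leqslant\sigma^n$ you need a \emph{lower} bound $\ell(m)\geqslant\Omega(\log m)$ (equivalently, that no string of length $n$ can encode a number larger than $\sigma^n$). Nothing in Steps 1--2 rules out the adversarial possibility that some other, longer-than-necessary string of length $n$ encodes a huge number. This is why Step~3 collapses: the inequality $r(n)+1\leqslant|\Sigma|^{c\log_2 r(n)+d+1}$ involves $n$ nowhere and, after taking logarithms, becomes $\log_{|\Sigma|}(r(n)+1)\leqslant c\log_2 r(n)+d+1$, which holds trivially for any $c\geqslant 1$ and so carries no information. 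The auxiliary doubling estimate $r(n+N)\geqslant 2r(n)$ is again a \emph{lower} bound on $r$ and cannot be ``unwound'' into an upper bound.

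The sketch in your final paragraph, however, is essentially the correct repair and is close in spirit to the paper's own argument. Concretely: after passing to a bijective presentation, the order $\leqslant$ is first-order definable in $(\mathbb{N};+)$, hence FA-recognizable; a pumping argument on the two-tape automaton for $\leqslant$ shows that for a fixed $w\in L$, any $u\in L$ with $\psi(u)\leqslant\psi(w)$ satisfies $|u|\leqslant|w|+c$ for a constant $c$ (otherwise one pumps the $(\sigma,\diamond)$-tail of $u\otimes w$ to produce infinitely many representatives below $\psi(w)$, contradicting injectivity). Thus all $\psi(w)+1$ elements of $\{0,\dots,\psi(w)\}$ have distinct representatives of length at most $|w|+c$, forcing $\psi(w)+1\leqslant|\Sigma|^{|w|+c+1}$, and hence $r(n)\leqslant|\Sigma|^{n+c+1}$. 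The paper's own proof runs the same kind of ``pump a tail, then count'' argument but with the three-tape automaton for $+$ rather than the one for $\leqslant$: pumping gives $\max\{|u|,|v|\}\leqslant|w|+c$ whenever $\psi(u)+\psi(v)=\psi(w)$, and there are exactly $\psi(w)+1$ ordered pairs $(u,v)$ with that property, so counting the at most $|\Sigma|^{|w|+c+1}$ choices of $u$ gives the same bound without needing to invoke FA-recognizability of $\leqslant$. Either route works; yours as actually written does not close, but the hint at the end would, once carried out.
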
	
   \begin{proof}   	  
      Without loss of generality we can 
      assume that 
      $\psi : L \rightarrow \mathbb{N}$ 
      is bijective. 
      Indeed, 
      let $L' = \{ u \in L \, : \,  
      \forall v \left[ \psi(u) = \psi(v) \implies u \leqslant_{llex} v \right] \}$ and 
      $\psi' : L' \rightarrow 
      \mathbb{N}$ be the restriction of 
      $\psi$ onto $L' \subseteq L$, where $\leqslant_{llex}$ is a length--lexicographic ordering. The mapping 
      $\psi': L' \rightarrow \mathbb{N}$ is 
      a bijective FA--presentation of 
      the structure 
      $(\mathbb{N}; +)$.     
      Furthermore, the function   
      $r'(n) = \max \{ \psi'(w') \, : \,w' \in  L'^{\leqslant n} \}$ is equal 
      to $r(n) = \max \{ \psi(w) \, : \,w \in 
      L ^{\leqslant n} \}$.

      Now we notice that there exists a constant $c > 0$ such that for every triple $u,v,w \in L$ for which 
      $\psi (u) + \psi (v)  = \psi (w)$ the inequality 
      $\max  \{|u|,|v|\} \leqslant |w| + c$ holds.   
      This can be shown  as follows.  
      Since the relation $R =\{(u,v,w) \in L^3\, : \,
      \psi(u) + \psi(v) = \psi(w)\} \subseteq 
      \Sigma_\diamond ^{3*}$ is $3$--tape 
      FA--recognizable, there exists a finite automaton 
      $\mathcal{M}$ recognizing the language 
      $\otimes R= \{ u \otimes v \otimes w \, | \,
       \psi (u) + \psi (v)  = \psi (w) \}$.  
      Let $c$ be the number of states 
      in $\mathcal{M}$. 
      If $\max \{|u|, |v|\} > |w|+c$, then by the same argument as in the pumping lemma there exist
      $x,y,z \in \Sigma_\diamond ^{3*}$ for which 
      $u \otimes v \otimes w = xyz$,  $|x| \geqslant |w|$ and 
      $|y| \leqslant c$    
      such that every string $x y^n z$, $n \geqslant 0$
      is in the language $\otimes R$. This implies that there are 
      infinitely many $u',v' \in L$ for which  
      $\psi(u') + \psi (v') = \psi (w)$. 
      As $\psi : L \rightarrow \mathbb{N}$ is bijective, 
      we immediately get a contradiction. Therefore, 
      $\max \{|u|, |v|\} \leqslant |w|+c$.

   	  Let $m = \psi(w)$ and 
   	  $k = |w|+c$. There 
   	  exist exactly $m+1$ pairs
   	  $u,v \in L$ for which 
   	  $\psi(u) + \psi (v) = m$
   	  obtained from the $m+1$ 
   	  identities: $0+m=m,1+(m-1)=m,
   	  \dots,m+0=m$.   
   	  On the other hand, 
   	  the number of such pairs
   	  is bounded from above 
   	  by $1+\mu + \dots +\mu^{k} \leqslant 
   	  \frac{\mu^{k+1}-1}
   	  {\mu-1} \leqslant \mu^{k+1}$, where 
   	  $\mu = \# \Sigma$ is the number of 
   	  symbols in the alphabet $\Sigma$. It is assumed 
   	  that $\mu > 1$ as there 
   	  exists  
   	  no FA--presentation of
   	  the structure 
   	  $(\mathbb{N};+)$
   	  over a unary alphabet
   	  (this  can be  proved using the pumping lemma). 
   	  Therefore, 
   	  $m \leqslant m+1 \leqslant 
   	   \mu^{k+1}$ which 
   	  implies that  
   	$\psi(w) \leqslant 
   	 \mu^{c+1}\mu^{|w|}$.
   	 Therefore, for every 
   	 $w \in L ^{\leqslant n}$ 
   	 we have: 
   	 $\psi(w) \leqslant 
   	 \mu^{c+1} 
   	 \mu^{n}$.
   	 This implies that for any 
   	 $\sigma > \mu$, the 
   	 function $\sigma^n$ is greater than 
   	 or equal to $r(n)$.      	     	      	  
   \end{proof} 

\begin{remark}
  We note that the proof of Lemma \ref{N_plus_exp_growth} 
  cannot be generalized for the structure $(\mathbb{Z};+)$ as 
  for every $m \in \mathbb{Z}$
  there exist infinitely many $m_1,m_2 \in \mathbb{Z}$ 
  for which $m_1 + m_2 = m$. 
  Recall that the problem whether there exists a FA--presentation 
  of $(\mathbb{Z};+)$, for which the set
  of all nonnegative integers $\{z \in \mathbb{Z}\, : \,z \geqslant 0\}$
  is not regular, is open, 
  see \cite{SemiautomaticStrCSR,SemiautomaticStrJournal}. 	
  For an example of a FA--presentation 
  of $(\mathbb{Z};\mathrm{S})$ for which 
  the set of all nonnegative integers $\{z \in \mathbb{Z}\, : \,z \geqslant 0\}$ is not 
  regular see \cite{khoussainov2004definability}. 
  So the question whether the function 
  $\widetilde{r}(n)$, defined as  
  $\widetilde{r}(n) = \max \{ |\psi(w)| \, : \,w \in 
  L ^{\leqslant n} \}$ if 
  $L^{\leqslant n} \neq \varnothing$ 
  and $\widetilde{r}(n) = 0$ if 
  $L^{\leqslant n} = \varnothing$, 
  is bounded from above by an 
  exponential function for each FA--presentation 
  $\psi : L \rightarrow \mathbb{Z}$ of  $(\mathbb{Z};+)$ 
  cannot be trivially reduced to Lemma \ref{N_plus_exp_growth}.  
\end{remark}	
 Below we show that Lemma \ref{N_plus_exp_growth} fails  
 to hold for some FA--presentations of the structure $(\mathbb{N};\mathrm{S})$
 by constructing a concrete example for which the function 
 $r(n)$ grows faster than any tower of exponents of an arbitrary height, 
 see Corollary \ref{r_grows_faster_t_h_cor}.   
  
   Let $V$ be a set of all tuples 
   $v = (a,b,c,d)$ for which 
   $a,b,c$ and $d$ are integers such that
   the following three conditions are satisfied:
   \begin{enumerate}[I)] 
     \item{$a\geqslant 0,b \geqslant 0$, 
     	   $c \in \{ 2^k\, : \, k\geqslant 0 \}$ and 
     	   $d \in \{ 0, 1\}$;}
     \item{if $a>0$ and $b=0$, then $c>1$;}
     \item{if $a=0$, then $c=1$.}
   \end{enumerate}
   We define the function 
   $f : V \rightarrow V$ 
   according to the following six rules:    
   \begin{enumerate}[1)] 	
   	  \item{if $d=0, a>0$ and $b>0$, then 
   	        $f:(a,b,c,0) \mapsto (a,b-1,2c,0)$;}
      \item{if $d=0, a>0$ and $b=0$, then 
            $f:(a,0,c,0) \mapsto (a-1,c,1,0)$;}           
   	  \item{if $d=0, a=0$, then 
   	  	    $f : (0,b,1,0) \mapsto (0,b+1,1,1)$;}
      \item{if $d=1, c>1$, then 
            $f : (a,b,c,1) \mapsto
                 (a,b+1, \frac{c}{2},1)$;}        	    
      \item{if $d=1, c=1$ and $b \in 
      	       \{2^k \, : \, k > 0\}$, 
            then $f : (a,b,1,1) \mapsto (a+1,0,b,1)$;} 
      \item{if $d=1, c=1$ and $b \notin	
      	                 \{2^k \, : \, k > 0\}$, then 
                       $f: (a,b,1,1) \mapsto (a,b,1,0)$.}   	      	  
   \end{enumerate}	  
   For example, let us consecutively apply the function $f$  twenty three times to the tuple $(0,0,1,1)$. 
   We obtain: 
   \begin{equation*}
    \begin{split} 
   	 (0,0,1,1) \xrightarrow{6} (0,0,1,0) \xrightarrow{3} (0,1,1,1)
   	 \xrightarrow{6} (0,1,1,0) \xrightarrow{3} 
   	 (0,2,1,1) \xrightarrow{5} (1,0,2,1) 
   	 \xrightarrow{4}  \\ (1,1,1,1) \xrightarrow{6} 
   	 (1,1,1,0) \xrightarrow{1} (1,0,2,0)
   	 \xrightarrow{2} (0,2,1,0) \xrightarrow{3} 
   	 (0,3,1,1) \xrightarrow{6} (0,3,1,0) 
   	 \xrightarrow{3} \\  (0,4,1,1)  \xrightarrow{5} 
   	 (1,0,4,1) \xrightarrow{4} (1,1,2,1) 
   	 \xrightarrow{4} (1,2,1,1) \xrightarrow{5} (2,0,2,1)
   	 \xrightarrow{4} (2,1,1,1) \xrightarrow{6}  \\ 
   	 (2,1,1,0) 
   	 \xrightarrow{1} (2,0,2,0) \xrightarrow{2} (1,2,1,0) \xrightarrow{1} (1,1,2,0) \xrightarrow{1} (1,0,4,0) \xrightarrow{2} (0,4,1,0),	   
   	\end{split}  	   
   \end{equation*}	
  where each of the numbers above the arrows indicates one of 
  the six rules defining the function $f$. 
  Note that the tuple $(0,0,1,1)$ does not have a
  preimage with respect to $f$. 
  \begin{proposition}
  \label{f_is_correct}		
   The function $f: V \rightarrow V$ is correctly defined.     
  \end{proposition}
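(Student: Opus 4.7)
The plan is to show two things: first, that the six rules are \emph{exhaustive and mutually exclusive} on $V$, so that $f$ is a well-defined total function; second, that for each input $v=(a,b,c,d) \in V$, the image $f(v)$ again satisfies conditions I, II, III.

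For the first step I would split on $d$. When $d=0$, the pair $(a,b)$ falls into the disjoint cases $\{a>0,b>0\}$, $\{a>0,b=0\}$, $\{a=0\}$, handled by rules 1, 2, 3 respectively; rule 3 is written in the form $(0,b,1,0)$ because condition III forces $c=1$ whenever $a=0$. When $d=1$, the rules split on $c$: rule 4 handles $c>1$, and rules 5, 6 partition the $c=1$ case according to whether $b \in \{2^k : k>0\}$.

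The second step is a case-by-case verification of the output. Stability of condition I is easy: rule 1 maps $c \mapsto 2c$ and rule 4 maps $c \mapsto c/2$, both of which preserve $\{2^k : k\geqslant 0\}$ (using $c>1$ for rule 4), while the other coordinates clearly stay nonnegative. The delicate points are conditions II and III. For rule 1 with $b=1$, the new $b$ is $0$ but the new $c=2c\geqslant 2$, so II holds. For rule 2, input condition II yields $c\geqslant 2$, so the new middle entry is at least $2$, and the output can only trigger III, which holds since the new $c=1$. For rule 4, input condition III forces $a>0$ (otherwise $c=1$ would contradict $c>1$), so the new first coordinate is positive and both II and III are vacuous for the output. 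For rule 5, the hypothesis $b\in\{2^k : k>0\}$ gives $b\geqslant 2$, which is precisely what makes the new $c=b>1$ and so salvages II for the output $(a+1,0,b,1)$. Finally, for rule 6, the only potential obstruction to II on the output is $a>0$ with $b=0$, but input condition II applied with $c=1$ already rules this out.

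The only real obstacle is bookkeeping: several rules remain valid only because conditions II and III in the \emph{input} already rule out configurations that would break II or III in the \emph{output}, and one must resist the temptation to check rules in isolation. Once these implications (input II gives $c\geqslant 2$ in rule 2, input III forces $a>0$ in rule 4, input II forbids $a>0, b=0$ in rule 6) are made explicit, the verification reduces to a short table.
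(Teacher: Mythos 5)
Your proposal is correct and follows essentially the same approach as the paper's proof: a rule-by-rule verification that if $v$ satisfies conditions I, II, III, then so does $f(v)$, with the key observations being exactly the ones the paper uses (input II gives $c>1$ in rule 2, input III forces $a>0$ in rule 4, input II rules out $a>0,b=0$ in rule 6, and $b\geqslant 2$ gives new $c>1$ in rule 5). Your added remark that the six rules are exhaustive and mutually exclusive on $V$ is a reasonable and worthwhile observation that the paper leaves implicit. One small imprecision: in rule 4 you say that $a>0$ makes ``both II and III vacuous''; $a>0$ alone handles III, but II for the output $(a,b+1,c/2,1)$ is vacuous because the second coordinate $b+1$ is positive, not because $a>0$ (condition II triggers only when the first coordinate is positive \emph{and} the second is zero). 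The conclusion is unaffected, but the stated reason should be corrected.
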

  \begin{proof} 
     In order to verify that $f : V \rightarrow V$ is correctly defined 
     one needs to check that for each of the six rules: if 
     $v \in V$, then $f(v) \in V$. That is, if the conditions 
     \text{I}, \text{II} and \text{III} hold for the tuple $v$, then they hold for
     the tuple $f(v)$ as well.   
     Clearly, the condition \text{I} holds for all $f(v)$, 
     $v \in V$.  
     
     Let us check it for the condition \text{II}. For the rule 1 
     we have $f(v)=(a,b-1,2c,0)$, so $2c>1$; therefore, 
     the conclusion of the condition \text{II} holds for $f(v)$.  
     For the rule 2 we have $f(v)=(a-1,c,1,0)$ for
     $c > 0$, so the assumption of the condition \text{II} is not valid for $f(v)$. For the rule 3 we have $f(v)=(0,b+1,1,1)$, so
     the assumption of the condition \text{II} is not valid 
     for $f(v)$
     as $b+1>0$. For the rule 4 we have $f(v) = (a,b+1,\frac{c}{2},1)$, 
     so the assumption of the condition \text{II} is not valid 
     for $f(v)$ as $b+1>0$. For the rule 5 we have 
     $f(v) = (a+1,0,b,1)$ for $b \in \{2^k \, : \, k>0\}$,
     so $b>1$; therefore, the conclusion of the 
     condition \text{II} holds for $f(v)$. 
     For the rule 6 we have $f(v) = (a,b,1,0)$ for 
     $b \notin \{2^k \, : \, k>0\}$.   
     If $a>0$ and $b=0$, then $v = (a,b,1,1)$ cannot be 
     in $V$ as the condition \text{II} is not satisfied for $v$.  
    
     Now let us check it for the condition \text{III}. For the rule 
     1 we have $f(v)=(a,b-1,2c,0)$ for $a>0$, so 
     the assumption of the condition \text{III} is not valid
     for $f(v)$. 
     For the rule 2 we have 
     $f(v)=(a-1,c,1,0)$, so the conclusion of the 
     condition \text{III} holds for $f(v)$. For the rule 3 we have 
     $f(v) = (0,b+1,1,1)$, so the conclusion of 
     the condition \text{III} holds for $f(v)$. For the rule 4 we have 
     $f(v)= (a,b+1,\frac{c}{2},1)$. If $\frac{c}{2}>1$, then $c>1$. Therefore, if $a=0$, then 
     $v=(a,b,c,1)$ cannot be in $V$ as the condition 
     \text{III} is not satisfied for $v$. For the rule 5 
     we have $f(v) = (a+1,0,b,1)$, so the
     assumption of the condition \text{III} is not valid for 
     $f(v)$ as $a+1 > 0$. For the rule 6 we have 
     $f(v) = (a,b,1,0)$, so the conclusion of the 
     condition \text{III} is valid for $f(v)$.                
  \end{proof}	
  
  \begin{proposition}
  \label{f_is_one-to-one}
  	 The function $f : V \rightarrow V$ is  one--to--one.     
  \end{proposition}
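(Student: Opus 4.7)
The plan is to verify injectivity by the standard ``disjoint image'' approach: show that each of the six rules is individually injective on its own domain of definition, and then show that the images of the six rules are pairwise disjoint. Since every tuple in $V$ on which $f$ is defined falls under exactly one of the six rules (the rules partition $V$ by the value of $d$ and by case splits on $a,b,c$), any collision $f(v)=f(v')$ would either come from a single rule (contradicting step one) or across two rules (contradicting step two).

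For the first step, each rule can be inverted by inspection of the coordinates alone: Rule 1 sends $(a,b,c,0)$ to $(a,b{-}1,2c,0)$, from which we recover $(a,b,c)$; Rule 2 sends $(a,0,c,0)$ to $(a{-}1,c,1,0)$, from which the preimage is determined; similarly for Rules 3--6 one reads the input coordinates directly off of the output. This gives injectivity of each individual rule without further work.

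For the second step, I would first split the six rules into two groups according to the value of $d$ in the image: Rules 1, 2, 6 have outputs with $d=0$, while Rules 3, 4, 5 have outputs with $d=1$. Within the $d=0$ group, the third coordinate of the output is $2c\geqslant 2$ for Rule 1 and $1$ for Rules 2 and 6, which separates Rule 1 from the others; and Rule 2 is separated from Rule 6 by observing that, by condition II applied to the input $(a,0,c,0)$ of Rule 2 (where $a>0$), we must have $c>1$, so the second coordinate of Rule 2's output lies in $\{2^k : k>0\}$, whereas Rule 6's output has second coordinate $b\notin\{2^k : k>0\}$. Within the $d=1$ group, Rule 3 is separated from Rules 4 and 5 by the first coordinate (which is $0$ for Rule 3 but strictly positive for Rule 5 by construction and for Rule 4 by condition III applied to the input, since $c>1$ forces $a>0$); and Rule 4 is separated from Rule 5 by the second coordinate (which is $b+1\geqslant 1$ for Rule 4 and $0$ for Rule 5).

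The main obstacle is the bookkeeping for the $d=0$ subcases: distinguishing the images of Rules 2 and 6 really does use condition II on the input of Rule 2, and distinguishing the images of Rules 3 and 4 really does use condition III on the input of Rule 4. Once these two uses of the defining conditions of $V$ are pinpointed, the rest of the argument is a routine coordinate-by-coordinate comparison.
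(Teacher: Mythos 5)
Your proof is correct and follows essentially the same strategy as the paper: split by the fourth coordinate into the groups $\{1,2,6\}$ and $\{3,4,5\}$, then within each group do a coordinate-by-coordinate comparison, invoking condition II to separate the images of Rules 2 and 6 and condition III to separate Rules 3 and 4. The only cosmetic differences are that you separate Rule 3 from $\{4,5\}$ by the first coordinate before separating 4 from 5 by the second, whereas the paper first separates Rule 5 from $\{3,4\}$ by the second coordinate; and the closing sentence mislabels the Rule 3 vs.\ Rule 4 distinction as a ``$d=0$ subcase'' when it belongs to the $d=1$ group, a slip that does not affect the argument.
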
  
  \begin{proof}
     In order to verify that $f:V \rightarrow V$ is 
     a one--to--one correspondence one needs to check that for 
     each pair of rules 
     $i$ and $j$, where 
     $i,j=1,\dots,6$, for all $u \in V$ and 
     $v \in V$ for which   
     the $i$th and $j$th rules are applied to 
     $u$ and $v$, respectively, if $f(u)=f(v)$, 
     then $u=v$. Clearly, this holds if $i=j$. 
     Also, if $i$ and $j$ belong to the different 
     sets of rules $\{1,2,6\}$ and $\{3,4,5\}$, 
     then $f(u) \neq f(v)$ because the fourth 
     components of $f(u)$ and $f(v)$ are different.  
     
     Let $i,j \in \{1,2,6\}$. If $i=1$ and $j=2$ or 
     $j=6$, for $u = (a,b,c,0)$ the equation 
     $f(u) = f(v)$ implies that
     $2c = 1$ which is impossible.
     If $i=2$ and $j=6$,  for $u = (a_1,0,c_1,0)$ and
     $v=(a_2,b_2,1,1)$ the equation 
     $f(u)=f(v)$ implies that 
     $c_1 = b_2$. By the condition \text{II} we have that 
     $c_1 >1$, so $c_1 \in \{2^k \, : \,k > 0\}$. 
     However, $b_2 \notin \{2^k \, : \, k>0\}$, so the 
     equation $c_1 = b_2$ is impossible.    
     
     Let $i,j \in \{3,4,5\}$. 
     If $i=5$ and $j=3$ or $j=4$, the equation $f(u)=f(v)$
     is impossible because the second component of $f(u)$ is 
     equal to $0$ while the second component
     of $f(v)$ is equal to $b+1>0$ in both cases.    
     If $i=3$ and $j=4$, 
     for $v = (a,b,c,1)$ the equation $f(u)=f(v)$
     implies that $a=0$. By the condition \text{III} we have that
     $c=1$. However, in the assumption of the rule 4 
     we have that $c>1$.       
  \end{proof}

  For a given integer $h \geqslant 0$ 
  we define $T(h)$ recursively by the formula  
  $T(h+1)=2^{T(h)}$ and the initial condition 
  $T(0)=1$. Let $\mathcal{T}$ be a set of towers of exponents 
  $\mathcal{T} = \{ T(h) \, : \, h \geqslant 0\}$; 
  that is, $\mathcal{T} = \{1,2,4,16,\dots,2^{2^{\dots^2}}, \dots \}$.  
  \begin{lemma} 
  \label{lemma_2m_notin_T}	 
  	 For each tuple of the form 
  	 $v= (0,2^m,1,1)$ for which 
  	 $2^m \notin \mathcal{T}$  there is an integer 
  	 $n \geqslant 0$ for which 
  	 $f^n (v) = (0,2^{m+1},1,1)$.
  \end{lemma}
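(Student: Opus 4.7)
The plan is to trace the trajectory of $v = (0, 2^m, 1, 1)$ under $f$, guided by the observation that rules 1 and 4 both preserve the quantity $b + \log_2 c$. This invariant lets me collapse long runs: from any state $(a, b, c, 1)$ with $c = 2^e$, applying rule 4 $e$ times reaches $(a, b + e, 1, 1)$, and from $(a, B, 1, 0)$ with $a \geq 1$ and $B \geq 0$, applying rule 1 $B$ times followed by rule 2 reaches $(a - 1, 2^B, 1, 0)$. I will call the first operation a \emph{collapse} and the second a \emph{cascade step}. The dynamics then naturally split into an ascending phase (while $d = 1$) and a descending phase (while $d = 0$).

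For the ascending phase I would define recursively $k_0 = m$ and $k_{i+1} = \log_2 k_i$ whenever $k_i \in \{2^k : k > 0\}$, stopping at the first index $\ell$ with $k_\ell \notin \{2^k : k > 0\}$. Since $2^m \notin \mathcal{T}$ is equivalent to $m \notin \mathcal{T}$, this unraveling cannot terminate at $k_\ell = 1$, so $k_\ell \geq 3$ and $k_\ell$ is not a power of two. Starting from $v$, rule 5 fires because $m \geq 1$, producing $(1, 0, 2^m, 1)$, which collapses to $(1, m, 1, 1) = (1, k_0, 1, 1)$. Each subsequent rule-5 application followed by a collapse advances from $(i, k_{i-1}, 1, 1)$ to $(i+1, k_i, 1, 1)$, and the process halts at $(\ell + 1, k_\ell, 1, 1)$.

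For the descending phase rule 6 fires because $k_\ell \notin \{2^k : k > 0\}$, yielding $(\ell + 1, k_\ell, 1, 0)$. Iterating the cascade step $\ell + 1$ times drives $a$ from $\ell + 1$ down to $0$ while replacing the second coordinate $B$ by $2^B$ at each step. Setting $B_0 = k_\ell$ and $B_{i+1} = 2^{B_i}$, the trajectory reaches $(0, B_{\ell+1}, 1, 0)$. By the construction of the unraveling, $m$ is a tower of $\ell$ twos above $k_\ell$, so $B_{\ell + 1} = 2^m$. Rule 3 then produces $(0, 2^m + 1, 1, 1)$, after which repeated rule-6/rule-3 pairs increment $b$ one by one: this is safe because no integer in the range $\{2^m + 1, \ldots, 2^{m+1} - 1\}$ is a power of two, so rule 5 never fires in this segment. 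After $2^m - 1$ such pairs the trajectory arrives at $(0, 2^{m+1}, 1, 1)$, as required.

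The main delicate point will be the bookkeeping that matches the $\ell + 1$ ascending moves (each taking a logarithm) with the $\ell + 1$ cascade steps (each exponentiating), to conclude $B_{\ell + 1} = 2^m$; this uses the equivalence $2^m \notin \mathcal{T} \iff m \notin \mathcal{T}$ in an essential way, as it is precisely what guarantees $k_\ell \neq 1$ and hence that rule 6, rather than rule 5, fires at the top of the cascade. Everything else reduces to a routine case analysis of the six rules together with the invariant $b + \log_2 c$.
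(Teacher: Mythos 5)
Your proof is correct and follows essentially the same trajectory as the paper's: ascend via rules 5 and 4 (your collapse steps) until a non-power-of-two second coordinate appears, descend via rules 1, 2 (your cascade steps) back to $(0,2^m,1,0)$, and then increment via rules 3 and 6 up to $(0,2^{m+1},1,1)$. The explicit bookkeeping with the sequences $k_i$, $B_i$, the invariant $b + \log_2 c$, and the reformulation $2^m \notin \mathcal{T} \iff m \notin \mathcal{T}$ are clean presentational touches but do not change the substance of the argument, which matches the paper's.
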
	
  \begin{proof} 
      Since $m > 1$ (otherwise $2^m \in \mathcal{T}$), we have that $2^m \in \{2^k\, : \,k>0\}$.  	
  	  Applying the rule 5 to $v$ we obtain that
  	  $f(v) = (1,0,2^m,1)$. Applying repeatedly the 
  	  rule 4 to $(1,0,2^m,1)$ we obtain 
  	  the tuple $(1, m, 1,1)$. 
  	  If $m \in \{2^k\, : \,k>0\}$, we continue 
  	  applying the rules 5 and 4 
  	  to obtain $(2,\log_2 m,1,1)$. 
  	  Continuing this process one gets  
  	  a tuple $(\ell+1,r,1,1)$, 
  	  where $\ell \geqslant 0$ and 
  	  $r =\log_2 ( \dots (\log_2 m)\dots) \notin 
  	  \{2^k\, : \,k>0\}$ 
  	  is obtained recursively from $m$ by applying
  	  the operator $\log_2$ exactly $\ell$ times. 
  	  Moreover, it follows from $2^m \notin \mathcal{T}$  that 
  	  $r>1$. We have 
  	  $(\ell+1,r,1,1) \xrightarrow{6} 
  	   (\ell+1,r,1,0)$.
  	  Applying repeatedly the rules 1 and 2 
  	  to the tuple $(\ell+1,r,1,0)$ we obtain 
  	  the tuple $(0,2^m,1,0)$. 
  	  Then we have that 
  	  $(0,2^m,1,0) \xrightarrow{3} (0,2^m+1,1,1)$.
  	  Applying repeatedly the rules 6 and 3 to  
  	  $(0,2^m+1,1,1)$ one finally gets the tuple $(0,2^{m+1},1,1)$. 
  \end{proof}	
   
  \begin{lemma} 
  \label{lemma_2m_in_T}	 
  	 For each tuple of the form $v=(0,2^m,1,1)$ for 
  	 which $2^m \in \mathcal{T}$ there exists an integer $n \geqslant 0$ 
  	 for which $f^n (v) = (a,1,1,0)$, where 
  	 $a$ is defined by the equation $T(a)=2^m$. 
  \end{lemma}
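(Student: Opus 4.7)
The plan is to reuse the opening portion of the trace established in the proof of Lemma~\ref{lemma_2m_notin_T}, and to exploit the fact that the tower hypothesis $2^m \in \mathcal{T}$ forces the iterated descent by $\log_2$ to terminate at the value $1$ rather than at some $r > 1$. Writing $2^m = T(a)$ with $a \geqslant 0$, I would proceed by induction on $a$.

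For the base case $a = 0$, one has $m = 0$ and $v = (0,1,1,1)$; since $1 \notin \{2^k : k > 0\}$, rule~6 applies immediately and maps $v$ to $(0,1,1,0)$, matching the claim with $n = 1$.

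For $a \geqslant 1$, I would mirror the opening moves of Lemma~\ref{lemma_2m_notin_T}: rule~5 followed by $m$ applications of rule~4 sends $v = (0,2^m,1,1)$ first to $(1, 0, 2^m, 1)$ and then to $(1, m, 1, 1) = (1, T(a-1), 1, 1)$. Because $T(j+1) = 2^{T(j)}$ with $T(j) \geqslant 1$ for every $j \geqslant 0$, each intermediate value $T(a-i)$ with $1 \leqslant i < a$ lies in $\{2^k : k > 0\}$, so the same rule~5 / rule~4 sandwich can be applied again, producing $(i+1, T(a-i-1), 1, 1)$. Iterating for $i = 1, 2, \dots, a-1$ brings us to $(a, T(0), 1, 1) = (a, 1, 1, 1)$, and a single application of rule~6 then yields $(a, 1, 1, 0)$.

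The only conceptual point, rather than a real obstacle, is to recognise that the condition $2^m \in \mathcal{T}$ is precisely what makes the iterated-$\log_2$ descent halt at $1$, triggering rule~6 directly at the target tuple. This is where the argument diverges from that of Lemma~\ref{lemma_2m_notin_T}, whose descent ended at some $r > 1$ (since $2^m \notin \mathcal{T}$) and therefore required the further machinery of rules~1, 2, and 3 to continue toward $(0, 2^{m+1}, 1, 1)$. Verifying that rule~5 is legitimate at each intermediate step reduces to the routine observation $T(j) \geqslant 1$, so no additional case analysis beyond what Lemma~\ref{lemma_2m_notin_T} already performs is needed.
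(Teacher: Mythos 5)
Your proof is correct and follows essentially the same approach as the paper's: apply rule~5 once and then rule~4 repeatedly to turn $(j,\,T(a-j),\,1,\,1)$ into $(j+1,\,T(a-j-1),\,1,\,1)$, iterate this sandwich down to $(a,1,1,1)$, and finish with a single application of rule~6. The only cosmetic difference is that you parameterise by $a$ directly where the paper writes $m = T(\ell)$ and concludes with $\ell+1$; the computation is identical.
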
 
  \begin{proof}
  	 For the case $2^m = 1$ we have: 
  	 $(0,1,1,1)
  	 \xrightarrow{6} (0,1,1,0)$. Now let $2^m > 1$. 
  	 Since  $2^m \in \mathcal{T}$, $m = T(\ell)$ for some 
  	 $\ell \geqslant 0$.  
  	 Applying repeatedly the rules 5 and 4 to $v$
  	 one gets a tuple $(\ell+1,1,1,1)$.  
  	 Finally we have 
  	 $(\ell+1,1,1,1) \xrightarrow{6} (\ell+1,1,1,0)$.
  	 The identity $m = T(\ell)$ implies that 
  	 $2^m = T(\ell+1)$.   
  \end{proof}	
 
  \begin{lemma} 
  \label{from_m_to_m_plus_1}	
  	For every integer $m \geqslant 0$ there exists 
  	an integer $n >0$ for which $f^n (m,1,1,0) = (m+1,1,1,0)$.   	
  \end{lemma}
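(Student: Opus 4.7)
The plan is to drive $(m,1,1,0)$ to $(m+1,1,1,0)$ in four phases, combining a direct low--level computation with two invocations of the preceding lemmas. First I would carry out a doubling phase: from any $(a,b,1,0)$ with $a\geqslant 1$ and $b\geqslant 1$, applying rule~1 exactly $b$ times produces $(a,0,2^b,0)$, and rule~2 then produces $(a-1,2^b,1,0)$; iterating this macro--step $m$ times starting from $(m,T(0),1,0)=(m,1,1,0)$ successively yields $(m-k,T(k),1,0)$ for $k=0,1,\dots,m$ and terminates at $(0,T(m),1,0)$. For $m=0$ this phase is vacuous.

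Next comes a bumping phase. Rule~3 sends $(0,T(m),1,0)$ to $(0,T(m)+1,1,1)$, after which alternating rules~6 and~3 increments the second coordinate by one at a time as long as it is not a power of two. Since $T(m)$ is itself a power of two (equal to $2^{T(m-1)}$ for $m\geqslant 1$, and $T(0)=1$), the next power of two strictly above $T(m)$ is $2T(m)$, so this phase reaches $(0,2T(m),1,1)$ without interruption. For $m\in\{0,1\}$ this already coincides with $(0,T(m+1),1,1)$. For $m\geqslant 2$ I would then iterate Lemma~\ref{lemma_2m_notin_T} at exponents $j=T(m-1)+1,T(m-1)+2,\dots,T(m)-1$; each such $j$ lies strictly between the tower values $T(m-1)$ and $T(m)$, so by the strict monotonicity of $T(\cdot)$ the exponent is not a tower value and $2^j\notin\mathcal{T}$, permitting each application. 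The iteration culminates in $(0,2^{T(m)},1,1)=(0,T(m+1),1,1)$. In every case a final invocation of Lemma~\ref{lemma_2m_in_T} then sends $(0,T(m+1),1,1)$ to $(a,1,1,0)$ with $T(a)=T(m+1)$, forcing $a=m+1$, as required.

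The main obstacle will be the combinatorial bookkeeping that the intended rewrite rule actually fires at each step and that no competing rule intervenes first. Concretely, in the bumping phase one must verify that no power of two lies strictly between $T(m)$ and $2T(m)$, and in the amplification step one must verify that every intermediate exponent $j$ lies strictly between $T(m-1)$ and $T(m)$, so that $2^j\notin\mathcal{T}$ and Lemma~\ref{lemma_2m_notin_T} is applicable throughout. Once these arithmetic facts about the sequence $T(\cdot)$ are in hand, the entire trajectory follows mechanically from the definition of $f$ and from Lemmas~\ref{lemma_2m_notin_T} and~\ref{lemma_2m_in_T}.
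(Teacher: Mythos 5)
Your proof is correct and follows essentially the same four--phase route as the paper (rules 1/2 to collapse $a$ to $0$, rule 3 plus alternating 6/3 to reach $(0,2^{T(m-1)+1},1,1)$, iterated use of Lemma~\ref{lemma_2m_notin_T} to reach $(0,T(m+1),1,1)$, and Lemma~\ref{lemma_2m_in_T} to finish). The only notable differences are cosmetic: the paper disposes of $m=0$ by a short explicit trace, whereas you fold it into the general scheme by observing phases~1 and~3 are vacuous, and you spell out the arithmetic side conditions (no power of two strictly between $T(m)$ and $2T(m)$; every $j$ with $T(m-1)<j<T(m)$ satisfies $2^j\notin\mathcal{T}$, noting also implicitly that $j\neq 0$) which the paper leaves tacit.
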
	
  \begin{proof}
     If $m=0$, we have: 
     $(0,1,1,0) \xrightarrow{3} 
     (0,2,1,1) \xrightarrow{5} (1,0,2,1) 
     \xrightarrow{4}  (1,1,1,1) \xrightarrow{6} 
     (1,1,1,0)$. Let $m>0$. Applying repeatedly the 
     rules 1 and 2 one gets a tuple 
     $(0,T(m),1,0)$. Applying to this tuple
     the rule 3 
     one gets a tuple $(0,T(m)+1,1,1)$. 
     The rules 6 and 3 should be then applied repeatedly
     to obtain a tuple $(0,2^{T(m-1)+1},1,1)$. 
     By Lemma \ref{lemma_2m_notin_T} applying 
     repeatedly the function $f$ to the tuple
     $(0,2^{T(m-1)+1},1,1)$ one  
     gets a tuple $(0,2^{T(m)},1,1)$. 
     Since $2^{T(m)} = T(m+1) \in  \mathcal{T}$, by Lemma   
     \ref{lemma_2m_in_T} applying 
     repeatedly the function $f$ to the tuple
     $(0,2^{T(m)},1,1)$ one finally gets a tuple
     $(m+1,1,1,0)$. 
  \end{proof}
 
  \begin{lemma} 
  \label{from_abcd_to_02m11}	
  	For every $v = (a,b,c,d) \in V$ there exist 
  	integers $m,n \geqslant 0$ for which 
  	$f^n(v) = (0,2^m,1,1)$. 
  \end{lemma}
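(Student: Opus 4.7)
The plan is to transform $v$, via a sequence of $f$-applications, first into a canonical tuple of the form $(0, b', 1, 0)$, and then to push its second coordinate up to the next power of two.

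To reach the canonical form I split on $d$. If $d = 0$ and $a = 0$, condition III forces $c = 1$, so $v$ is already canonical. If $d = 0$ and $a > 0$, applying rule~1 exactly $b$ times followed by one application of rule~2 yields $(a - 1,\, 2^{b} c,\, 1,\, 0)$; recursing on $a$ produces $(0, B, 1, 0)$ after $a$ outer rounds. If $d = 1$ and $c > 1$, rule~4 applied $\log_2 c$ times reduces us to $c = 1$. In the remaining case $d = 1$, $c = 1$: when $a = 0$, either the second coordinate already lies in $\{2^k : k \geq 0\}$ (so $v$ has the target form $(0, 2^m, 1, 1)$) or rule~6 produces $(0, b, 1, 0)$. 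When $a > 0$, condition~II forces $b > 0$; if $b \notin \{2^k : k \geq 1\}$ then rule~6 gives $(a, b, 1, 0)$ and we return to the $d = 0$ case. The only remaining configuration is $(a, 2^k, 1, 1)$ with $a > 0$ and $k \geq 1$: rule~5 followed by $k$ applications of rule~4 sends this to $(a + 1, k, 1, 1)$, on which we recurse.

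This inner recursion terminates because each recursive call replaces a second coordinate $2^{k_i}$ by $k_i < 2^{k_i}$; since the $k_i$ form a strictly decreasing sequence of positive integers, the chain must halt after finitely many steps, at which point the second coordinate has escaped $\{2^k : k \geq 1\}$ and falls into a previously handled case. Finally, starting from the canonical tuple $(0, b', 1, 0)$, rule~3 gives $(0, b' + 1, 1, 1)$; if $b' + 1$ is a power of two we are done, otherwise rule~6 followed by rule~3 advances the tuple to $(0, b' + 2, 1, 1)$. Since powers of two are cofinal in $\mathbb{N}$, after finitely many such rounds the second coordinate equals $2^m$ for some $m \geq 0$, as required.

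The most delicate point is the inner recursion in the $d = 1$, $c = 1$, $a > 0$ sub-case: at every intermediate stage one must verify that the tuple still lies in $V$, so that Proposition~\ref{f_is_correct} guarantees the next rule is applicable, and that the process does terminate. Both points reduce to the elementary observation that iterating $\log_2$ on any positive integer cannot remain inside $\{2^k : k \geq 1\}$ forever.
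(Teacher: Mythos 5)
Your proof is correct and follows essentially the same approach as the paper: a case analysis on $d$, then on $c$ and $a$, reducing first to a tuple of the form $(0,b',1,0)$ via rules 1, 2, 4 and 6, and then pushing the second coordinate up to the next power of two via rules 3 and 6. You spell out the termination of the rule-5/rule-4 recursion (iterating $\log_2$ strictly decreases a positive integer, so the second coordinate must eventually leave $\{2^k : k \geqslant 1\}$) a bit more explicitly than the paper does, but it is the same argument.
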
	
  \begin{proof}  
  	Let us consider first the case when 
  	$d=0$. If $a>0$, then applying repeatedly
  	the rules 1 and 2 to $v$ one gets a tuple 
  	of the form  $(0,b',1,0)$. Therefore, 
  	it is enough to analyze the case 
  	when $a=0$. If $b=0$ in a tuple 
  	$(0,b,1,0)$,  we have 
  	$(0,0,1,0) \xrightarrow{3} (0,1,1,1)$. 
  	Therefore, we can assume 
  	that $b>0$. Applying to $v$ the rule 
  	3 one gets a tuple $(0,b+1,1,1)$. 
  	If $b +1 = 2^k$ for some $k>0$, then we are done. 
  	Otherwise, the rules 6 and 3 should be  
  	repeatedly applied until one gets
  	a tuple $(0,2^k,1,1)$ for some $k>0$. 
  	
  	Now let us assume that $d=1$. 
  	If $c>1$, then applying repeatedly the rule 4
  	one gets a tuple $(a,b',1,1)$. 
  	Therefore, it is enough to analyse the 
  	case $c=1$. If $b \notin \{2^k\, : \, k>0 \}$, 
  	then applying to $v$ the rule 6 one gets a
  	tuple $(a,b,1,0)$. The lemma is  
  	already proved for the case $d=0$.  
  	If $b \in \{2^k\, : \, k>0 \}$, the rules 
  	5 and 4 should be repeatedly applied until 
  	one gets a tuple $(a',b',1,1)$ for 
  	$b' \notin \{2^k\, : \, k>0 \}$ -- the case
  	that we already analyzed.
  \end{proof}

  \begin{theorem} 
  \label{isomorph_thm}	 
  	 The structure $(V;f)$ is isomorphic to 
  	 $(\mathbb{N}; \mathrm{S})$. 
  \end{theorem}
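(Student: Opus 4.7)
The plan is to exhibit an explicit isomorphism $\phi : \mathbb{N} \to V$ given by $\phi(n) = f^n(v_0)$, where $v_0 = (0,0,1,1)$. Since $(\mathbb{N};\mathrm{S})$ is characterised, up to isomorphism, as an injective unary dynamical system with a unique element having no preimage and a single forward orbit covering the whole domain, it suffices to verify: (i) $f$ is injective; (ii) $v_0$ is the unique element of $V$ with no $f$-preimage; (iii) every element of $V$ lies in the forward orbit $O = \{f^n(v_0) : n \geqslant 0\}$.

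Point (i) is exactly Proposition \ref{f_is_one-to-one}. For the ``no preimage'' half of (ii), inspect the three rules (3), (4), (5) whose outputs have fourth coordinate $1$: rule 3 produces a tuple with second coordinate $b+1 \geqslant 1$, rule 4 does the same, and rule 5 produces a tuple with first coordinate $a+1 \geqslant 1$; none can equal $(0,0,1,1)$. The uniqueness part will follow once (iii) is established. The explicit computation preceding Proposition \ref{f_is_correct} already records $f^3(v_0) = (0,1,1,0)$, and combining this with (i) via the usual injectivity argument shows the elements $f^n(v_0)$ are pairwise distinct.

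The crux is (iii). The first half is to combine the existing lemmas to show that from any $v \in V$, finitely many applications of $f$ land in a tuple of the form $(a,1,1,0)$. By Lemma \ref{from_abcd_to_02m11}, some $f^{n_1}(v)$ equals $(0,2^m,1,1)$; iterating Lemma \ref{lemma_2m_notin_T} increases the exponent by $1$ at each stage, and because the tower sequence $T(h)$ is unbounded we eventually reach some exponent $m'$ with $2^{m'} \in \mathcal{T}$; Lemma \ref{lemma_2m_in_T} then yields an element of the form $(a,1,1,0)$. On the other hand, $(0,1,1,0) = f^3(v_0) \in O$, and a straightforward induction using Lemma \ref{from_m_to_m_plus_1} shows $(a,1,1,0) \in O$ for every $a \geqslant 0$. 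Hence for every $v \in V$ there exist $k \geqslant 0$ and $M \geqslant 0$ with $f^k(v) = f^M(v_0)$.

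The main obstacle is bridging from ``the forward orbit of $v$ meets $O$'' to ``$v \in O$'' itself. The key claim is: if $f^k(v) = f^M(v_0)$, then $v \in O$. Indeed, if $M \geqslant k$, then injectivity of $f^k$ (from (i)) forces $v = f^{M-k}(v_0) \in O$; whereas if $M < k$, then injectivity of $f^M$ forces $f^{k-M}(v) = v_0$ with $k-M \geqslant 1$, contradicting the absence of a preimage for $v_0$ established above. This proves $V = O$ and also the uniqueness part of (ii). Finally, the map $\phi(n) = f^n(v_0)$ is a bijection $\mathbb{N} \to V$ satisfying $\phi \circ \mathrm{S} = f \circ \phi$, giving the required isomorphism.
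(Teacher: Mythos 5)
Your proof is correct, but it takes a more explicit and self-contained route than the paper's. The paper appeals to the general structural fact that any injective unary function decomposes $V$ into orbit components each isomorphic to $(\mathbb{N};\mathrm{S})$, $(\mathbb{Z};\mathrm{S})$, or some $(\mathbb{Z}_n;\mathrm{S})$, uses Lemmas \ref{lemma_2m_notin_T}--\ref{from_abcd_to_02m11} to show that any two elements have colliding forward orbits (so there is only one component), notes that $V$ is infinite (ruling out $\mathbb{Z}_n$), and uses the absence of a preimage of $(0,0,1,1)$ to rule out $\mathbb{Z}$. You instead anchor at $v_0=(0,0,1,1)$, show its forward orbit $O$ contains every $(a,1,1,0)$ via Lemma \ref{from_m_to_m_plus_1}, use the same chain of lemmas to show every $v\in V$ eventually reaches some $(a,1,1,0)\in O$, and then supply a clean injectivity argument (splitting on whether $M\geqslant k$ or $M<k$) to lift ``forward orbit of $v$ meets $O$'' to ``$v\in O$''. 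What your version buys is that it avoids invoking the orbit-type classification as a black box and produces the explicit isomorphism $n\mapsto f^n(v_0)$; it also makes the ``no preimage'' verification explicit (checking rules 3, 4, 5), which the paper states without spelling out. What the paper's version buys is brevity: the component decomposition packages the injectivity bookkeeping and the infinitude/no-preimage dichotomy into a single standard structural step. Both arguments rest on the same core lemmas and both are sound.
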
	
  \begin{proof} 
  	It follows from Proposition \ref{f_is_one-to-one} 
  	that $V$ can be decomposed into disjoint 
  	components $V_i \subseteq V, i \in  I$ for which 
  	$\bigcup\limits_{i \in I} V_i = V$, 
  	$f(V_i) \subseteq V_i$ and 
  	each structure $(V_i, f|_{V_i})$ is isomorphic 
  	to either $(\mathbb{N};\mathrm{S})$, 
  	$(\mathbb{Z};\mathrm{S})$ or 
  	$(\mathbb{Z}_n; \mathrm{S})$, where 
  	for a cyclic group $\mathbb{Z}_n$ the successor function
  	is given by $\mathrm{S}(x) = x+1 \mod n$ for $x \in \mathbb{Z}_n$. 
  	Suppose that there exist at least two disjoint 
  	components $V_i$ and $V_j$. 
  	It follows directly from Lemmas \ref{lemma_2m_notin_T}--\ref{from_abcd_to_02m11} 
  	that for every $u \in V_i$ and 
  	$v \in V_j$ there exist integers $r,s$ and $m$  
  	such that $f^r(u) = f^s(v) = (m,1,1,0)$. Since 
  	$f^r(u) \in V_i$ and $f^s(v) \in V_j$ we obtain 
  	that $V_i \cap V_j \neq \varnothing $, so we get a contradiction. 
  	Therefore, there is only one component. So
  	$(V;f)$ is either isomorphic to 
  	$(\mathbb{N}; \mathrm{S})$ or
  	$(\mathbb{Z}; \mathrm{S})$ as $V$ is infinite. Because 
  	$(0,0,1,1)$ does not have a preimage with 
  	respect to $f$, $(V;f)$ must be isomorphic 
  	to $(\mathbb{N}; \mathrm{S})$. 
  \end{proof}
  
  For a given nonnegative integer $n$ let 
  $n = \sum_{i=0}^k \beta_i 2^i$ be its binary decomposition, where $\beta_i \in \{0,1\}$ for 
  $i=0,\dots, k-1$ and $\beta_k=1$. We denote by 
  $\overline{n}$ the string $\beta_0\beta_1\dots \beta_k$, i.e., the standard binary representation of $n$ written in the reverse order. Similarly, for a given $4$--tuple of
  nonnegative integers $v = (a,b,c,d)$ we denote 
  by $\overline{v}$ the convolution of strings  
  $\overline{a} \otimes \overline{b} \otimes  
   \overline{c} \otimes \overline{d}$.    
  Let $L$ be the language of strings $\overline{v}$ 
  representing all $4$--tuples $v \in V$: 
  $L = \{ \overline{v} \, : \, v \in V\}$.
  We denote by $\varphi : L \rightarrow V$ 
  a bijection which for every $v \in V$ sends the string 
  $\overline{v} \in L$ to $v$.         
  \begin{proposition} 
  \label{FA_pres_Vf}   
     The map $\varphi : L \rightarrow V$ is a 
     FA--presentation of the structure $(V;f)$. 
  \end{proposition}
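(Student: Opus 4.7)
The plan is to verify the three requirements in the definition of a FA--presentation: that $L$ is regular, that equality under $\varphi$ is automatic, and that $\widetilde{\mathrm{Graph}\,f}$ is automatic.

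First, I would show $L$ is regular. The language of reverse binary representations of nonnegative integers is regular (in particular, valid representations require no leading zeros in the standard order, i.e., no trailing zeros in the reversed order, except for the representation of $0$). The set of strings $\overline{c}$ with $c \in \{2^k : k \geqslant 0\}$ is $0^*1$, which is regular. The set of strings $\overline{d}$ with $d \in \{0,1\}$ is $\{0,1\}$, regular. The padded convolution of regular languages is regular. For conditions II and III, ``$a = 0$'', ``$a > 0$'', ``$b = 0$'', ``$c = 1$'' are each regular predicates on the components of the convolution (each requires only checking whether the corresponding track consists solely of $0$'s, is $\diamond$-padded, or has the special form ``$1$''). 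So $L$ is a finite Boolean combination of regular sets, hence regular.

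Next, since $\varphi$ is defined as a bijection, the preimage of equality on $V$ is literally the identity relation on $L^2$, which is trivially automatic.

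The heart of the proof is showing $\widetilde{\mathrm{Graph}\,f}$ is automatic. I would express this relation as the union over the six rules. For rule $i$, the component consists of pairs $\overline{v} \otimes \overline{v'}$ where $v$ satisfies the precondition of rule $i$ and $v' = f(v)$ by rule $i$. Each precondition involves tests of the form ``$d=0$'', ``$a>0$'', ``$b=0$'', ``$c>1$'', and ``$b \in \{2^k : k>0\}$''; the last amounts to checking that the $b$-track (reversed binary) lies in $0^+1$, which is regular. Each action is a simple automaton-computable transformation of tracks: copying a track (rules 2 and 5), incrementing or decrementing by $1$ (rules 1--5), multiplying by $2$ or dividing by $2$ (rules 1 and 4, which in reversed binary are prepending or stripping a leading $0$), and resetting a component to a constant (rules 2, 3, 5, 6). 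Each of these, viewed on the appropriate tracks of the $8$-track convolution $\overline{v} \otimes \overline{v'}$, is an automatic relation, so their conjunction (for a single rule) is automatic, and the union of the six rules is automatic.

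There is no real obstacle; the argument is a structural check. The only place where one must be slightly careful is the power-of-two test for $b$ in rules 5 and 6, but since powers of two have a trivially regular reversed-binary form, this causes no difficulty. Note also that the six preconditions partition the domain (this is essentially what Proposition \ref{f_is_correct} established), so there is no ambiguity in the case analysis, and the resulting relation is exactly $\widetilde{\mathrm{Graph}\,f}$.
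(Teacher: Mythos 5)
Your proposal is correct and follows essentially the same route as the paper: both verify regularity of $L$ by reducing conditions I, II, III to regular checks on the convolution tracks (with the key observation that powers of two reverse to $0^*1$), and both verify automaticity of $\mathrm{Graph}\,f$ by decomposing over the six rules and noting that the preconditions and the elementary operations $n \mapsto n \pm 1$, $n \mapsto 2n$, $n \mapsto n/2$ are all automatic on reversed-binary representations. You additionally spell out the trivial handling of the equality relation (since $\varphi$ is a bijection) and the reversed-binary mechanics of multiplying and dividing by two, which the paper leaves implicit; these are harmless elaborations of the same argument.
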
	
  \begin{proof}
  	 To prove the proposition one needs to show that 
  	 $L$ is a regular language and the function 
  	 $f_L =  \varphi^{-1} \circ f \circ \varphi$ is automatic.
  	 For the reverse binary representation 
  	 of nonnegative integers that we use, 
  	 the set $\{ 2^k \, : \,k \geqslant 0\}$ corresponds to the language $0^*1$ which is regular.     
     So it is easy to see that for the presentation given by 
  	 $\varphi : L \rightarrow V$ each of the conditions
  	 \text{I}, \text{II} and \text{III}
  	  defining the set $V$ can be verified by a finite automaton. As the class of regular languages is closed 
  	 under intersection, the language $L$ is regular. 	 
  	 Similarly, for each of the six rules defining $f$ the assumption can be verified by a finite automaton. 
  	 Moreover, for the presentation of an integer 
  	 $n \geqslant 0$ by $\overline{n}$ the functions: $n \mapsto n+1$, $n \mapsto n-1$, $n \mapsto 2 n$ and 
  	 $n \mapsto \frac{n}{2}$ are 
  	 FA--recognizable. 
  	 This implies that the function $f_L : L \rightarrow L$ is FA--recognizable.  	  	 	   
  \end{proof}
  
   By Theorem \ref{isomorph_thm} there is 
   an isomorphism of the structures $(V;f)$ and  
   $(\mathbb{N};\mathrm{S})$ mapping the tuple 
   $(0,0,1,1) \in V$ to $0 \in \mathbb{N}$. We 
   denote this isomorphism by 
   $\tau : V \rightarrow \mathbb{N}$.
   Let $\psi$ 
   be the composition $\psi = \tau  \circ \varphi$.
   By Proposition \ref{FA_pres_Vf} the bijection 
   $\psi: L \rightarrow \mathbb{N}$ is a 
   FA--presentation of the structure $(\mathbb{N};\mathrm{S})$.  
   Let $r : \mathbb{N} \rightarrow \mathbb{N}$ be 
   the function corresponding to $\psi : L \rightarrow \mathbb{N}$
   as it is given 
   in Definition \ref{def_r_function}:  
   $r (n) = \max \{ \psi(w) \, :  \,w \in 
   L ^{\leqslant n} \}$ if   
   $L ^ {\leqslant n} \neq \varnothing$ and 
   $r(n)=0$ if $L^{\leqslant n} = \varnothing$.  
   \begin{theorem} 
   \label{r_grows_faster_than_T}	
   	  The function $r(n)$  
   	  is greater than or equal to $T(n)$.   
   \end{theorem}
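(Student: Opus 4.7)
My plan is to exhibit, for every sufficiently large $n$, a short word of $L$ that codes a very large number. Given the explicit FA-presentation $\psi = \tau \circ \varphi$, I focus on the family of tuples $v_m := (m,1,1,0) \in V$, whose reverse-binary encoding $\overline{v_m} \in L$ has length $\max\{|\overline{m}|,1\} = \lfloor \log_2 m\rfloor + 1 \leq n$ as soon as $m \leq n$. It therefore suffices to show $N(m) := \tau(v_m) \geq T(m)$, because then $r(n) \geq \psi(\overline{v_n}) = N(n) \geq T(n)$ for every $n \geq 1$, which yields $r \geq T$ in the sense defined earlier.

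I would prove $N(m) \geq T(m)$ by induction on $m$. The base case $N(0) = 3$ follows from the explicit computation $(0,0,1,1) \xrightarrow{6} (0,0,1,0) \xrightarrow{3} (0,1,1,1) \xrightarrow{6} (0,1,1,0)$. For the inductive step, the crucial ingredient is a lower bound on the length of the $f$-trajectory from $v_m$ to $v_{m+1}$ constructed in the proof of Lemma~\ref{from_m_to_m_plus_1}. Because $f$ is bijective (Proposition~\ref{f_is_one-to-one}) and the orbit of $(0,0,1,1)$ visits each element of $V$ exactly once, exhibiting any set of distinct tuples traversed on this segment yields such a lower bound.

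The key observation is that the doubling step from $(0, 2^k, 1, 1)$ to $(0, 2^{k+1}, 1, 1)$ with $2^k \notin \mathcal{T}$, as unpacked in the proof of Lemma~\ref{lemma_2m_notin_T}, ends with a run of rules $6$ and $3$ that visits both $(0,b,1,0)$ and $(0,b,1,1)$ for every integer $b$ with $2^k < b \leq 2^{k+1}$. Summing these contributions for $k = T(m-1)+1,\dots, T(m)-1$, the trajectory from $v_m$ to $v_{m+1}$ contains at least $2(2^{T(m)} - 2^{T(m-1)+1}) \geq T(m+1)$ distinct tuples for all $m$ large enough that $2^{T(m-1)+1} \leq \tfrac12 T(m+1)$. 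Hence $N(m+1)-N(m) \geq T(m+1)$, so by the inductive hypothesis $N(m+1) \geq T(m) + T(m+1) \geq T(m+1)$, closing the induction; the finitely many small-$m$ exceptions are handled by verifying $N(m)\geq T(m)$ directly using the sample computation displayed in Section~\ref{compr_nat_numbers_sec}.

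The main obstacle is bookkeeping: one must carefully identify which intermediate tuples are actually visited inside a \emph{single} passage from $v_m$ to $v_{m+1}$, reading off the composite trajectory woven together in the proofs of Lemmas~\ref{lemma_2m_notin_T}, \ref{lemma_2m_in_T} and~\ref{from_m_to_m_plus_1}, and check that the family of tuples $(0,b,1,\cdot)$ identified above lies entirely within that segment. Once this is done, the bijectivity of $f$ rules out double-counting and the induction proceeds mechanically.
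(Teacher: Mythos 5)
Your argument is sound in outline, but it follows a noticeably more elaborate route than the paper's. Both proofs hinge on the same object, namely the $f$-trajectory from $(m,1,1,0)$ to $(m+1,1,1,0)$ constructed in Lemma~\ref{from_m_to_m_plus_1}, but they bound its length $\ell$ in different ways and deploy the bound differently. The paper isolates a single subsegment, from $(1,T(m-1),1,0)$ to $(1,0,T(m),0)$, along which rule~1 is applied exactly $T(m-1)$ times; this immediately gives $\tau\bigl((m+1,1,1,0)\bigr)\geqslant\ell\geqslant T(m-1)$ with no induction, and then the paper plugs in $m=2^{n-1}$ so that $\overline{2^{n-1}+1}\otimes 1\otimes 1\otimes 0$ has length exactly $n$, yielding $r(n)\geqslant T(2^{n-1}-1)\geqslant T(n)$. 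You instead establish the stronger invariant $\tau\bigl((m,1,1,0)\bigr)\geqslant T(m)$ by induction, counting the tuples $(0,b,1,\cdot)$ visited during the rule-6/rule-3 incrementing phases of the doubling steps $(0,2^k,1,1)\to(0,2^{k+1},1,1)$ for $k=T(m-1)+1,\dots,T(m)-1$, and then finish with $m=n$. Both approaches reach $r\geqslant T$. The trade-off is clear: your invariant is cleaner and your per-step estimate is sharper, but it requires exactly the bookkeeping you flag (confirming which $(0,b,1,\cdot)$ really lie on the segment and that no double-counting occurs), plus separate checking of the small-$m$ base cases, whereas the paper avoids induction entirely and compensates for its weaker $\ell\geqslant T(m-1)$ bound by choosing $m$ exponentially large in $n$. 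Two small remarks: by using only $m=n$ in the final step you discard almost all the strength of your invariant, since $\bigl|\overline{(n,1,1,0)}\bigr|$ is only about $\log_2 n$, not $n$ (the paper's choice $m=2^{n-1}$ is the natural one); and $f$ is injective but not bijective — $(0,0,1,1)$ has no preimage — though injectivity is all your counting actually needs.
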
	
   \begin{proof} 
             
      For a given $n>2$, let 
      $u_n \in \{0,1\}^*$ be the string  
      $u_n = \overline{2^{n-1}}$,
      that is, 
      $u_n = 0^{n-1} 1$. 
      Let $m =2^{n-1}$. 
      The string 
      $w_n = u_n \otimes 1 \otimes 1 \otimes 0$ represents a tuple 
      $(m,1,1,0)$: 
      $\varphi (w_n) =(m,1,1,0)$.  
      By Lemma \ref{from_m_to_m_plus_1}, 
      there exists $\ell>0$ 
      for which 
      $f^\ell (m,1,1,0) = (m+1,1,1,0)$.
      In particular, we have:  
      $(m,1,1,0) \xrightarrow{f} 
       \dots \xrightarrow{f} 
       (1,T(m-1),1,0) 
       \xrightarrow{f} \dots 
        \dots  
       \xrightarrow{f} (1,0,T(m),0)
       \xrightarrow{f} \dots 
       \xrightarrow{f} (m+1,1,1,0)$,
       where in the subsequence 
       $(1,T(m-1),1,0) 
       \xrightarrow{f} \dots 
       \dots  
       \xrightarrow{f} (1,0,T(m),0)$
       the function $f$ is applied 
       exactly $T(m-1)$ times. Therefore, 
       $\ell \geqslant T(m-1)$. 
       Now let $v_n = \overline{2^{n-1}+1} = 10^{n-2}1$ and 
       $w_n'=v_n \otimes 1 \otimes 1 
        \otimes 0$. 
       The string $w_n'$ 
       represents the tuple 
       $(m+1,1,1,0)$:
       $\varphi (w_n')= (m+1,1,1,0)$.
       Clearly, $|w_n|= |w_n'| = n$. 
       Therefore, $r(n) \geqslant
       \psi(w_n') \geqslant  
       \ell \geqslant T(m-1)$. So,  
       $r(n) \geqslant T(2^{n-1}-1)$
       for all $n>2$. 
       Since 
       $2^{n-1} -1 \geqslant n$ for 
       $n>2$, we have that: 
       $r(n) \geqslant T(n)$ for 
       all $n>2$ which implies that 
       $r$ is greater than or equal to $T$.          
   \end{proof}	
   For a given integer $h \geqslant 0$ let $t_h (n)$
   be the function defined recursively 
   by the formula $t_{h+1}(n) =2^{t_h(n)}$ and 
   the initial condition $t_0 (n) = n$; that is, 
   $t_1(n) = 2^n, t_2 (n) = 2^{2^n}, 
   t_3 (n) = 2^{2^{2^n}}$ and etc. 
   \begin{corollary}
   \label{r_grows_faster_t_h_cor}
      For each $h \geqslant 0$, $r \geqslant t_h$. That is, the function $r$ grows 
      faster than any tower of exponents of 
      a fixed height.    	
   \end{corollary}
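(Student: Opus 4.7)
The plan is to bootstrap directly from Theorem \ref{r_grows_faster_than_T}, which already gives $r(n) \geqslant T(n)$ for all $n > 2$. It therefore suffices to show that the single function $T$ itself eventually dominates each fixed--height tower: for every $h \geqslant 0$ there exists $N_h$ with $T(n) \geqslant t_h(n)$ for all $n \geqslant N_h$, since then $r(n) \geqslant T(n) \geqslant t_h(n)$ for $n \geqslant \max(3, N_h)$, which is exactly the relation $r \geqslant t_h$.

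The key observation is the identity
\[
T(n) = t_h(T(n-h)) \qquad \text{for all } n \geqslant h,
\]
obtained by iterating the defining recursion $T(k+1)=2^{T(k)}$ exactly $h$ times (a trivial induction on $h$). Since $t_h$ is nondecreasing in its argument, the desired inequality $T(n) \geqslant t_h(n)$ reduces to the much tamer bound $T(n-h) \geqslant n$, equivalently $T(k) \geqslant k+h$ for $k = n-h$.

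To obtain this last bound, a simple induction (using $2^k \geqslant k+1$ for $k \geqslant 1$) shows $T(k) \geqslant k$ for every $k \geqslant 0$; moreover $T$ grows much faster than linearly, so $T(k) - k \to \infty$ and for each fixed $h$ there is $K_h$ with $T(k) \geqslant k+h$ whenever $k \geqslant K_h$. Setting $N_h := K_h + h$ then gives $T(n-h) \geqslant n$ and hence $T(n) \geqslant t_h(n)$ for all $n \geqslant N_h$, completing the argument. The one subtlety worth flagging is the choice of inductive variable: a direct induction of the shape ``$T \geqslant t_h \Rightarrow T \geqslant t_{h+1}$'' only propagates with an argument shift of one, and these shifts would accumulate with $h$; the identity $T(n) = t_h(T(n-h))$ absorbs all $h$ levels of exponentiation at once and reduces the whole task to the mild linear bound $T(k) \geqslant k+h$.
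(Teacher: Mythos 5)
Your proof is correct and follows the same route as the paper, which deduces the corollary from Theorem \ref{r_grows_faster_than_T} together with the observation that $T \geqslant t_h$ for each fixed $h$; the paper simply declares that last observation ``simple,'' whereas you supply the (valid) details via the identity $T(n) = t_h(T(n-h))$ and the linear lower bound $T(k) \geqslant k+h$.
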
	
   \begin{proof} 
   	  This immediately follows from 
   	  Theorem \ref{r_grows_faster_than_T} and    	  
   	  a simple observation that the function $T$ is greater than or equal to $t_h$ for 
   	  every $h \geqslant 0$.
   \end{proof}

 \section{Compressibility Rate}
 \label{comp_rate_section}
 
 In this section we 
 extend the notion of  
 a numerical characteristic 
 $r(n)$ defined  for 
 FA--presentations of 
 $(\mathbb{N};\mathrm{S})$ to a more 
 general notion of a
 compressibility rate $s(n)$ of 
 one FA--presentation relative
 to another for any given 
 FA--presentable structure. 
 We show that for each 
 pair of FA--presentations of the 
 Presburger arithmetic 
 $(\mathbb{N};+)$ the compressibility rate is bounded
 from above by a linear function.

 Let  
 $\mathcal{A} = (D; R_1, \dots, R_\ell, f_1,\dots,f_m)$ be a FA--presentable structure
 and $\psi_0 : L_0 \rightarrow D$ be 
 a FA--presentation of $\mathcal{A}$. 
 Let $\psi: L \rightarrow D$ also be a 
 FA--presentation of $\mathcal{A}$. 
 We define  
 $\xi : L \rightarrow \mathbb{N}$ 
 to be a function which maps a given 
 string $w \in L$ to 
 $$\xi(w) = \min \{|v| \, : \,
  \psi_0 (v) = \psi(w), v \in L_0 \}.$$ 
 The value $\xi(w)$ for $w \in L$ is the minimal length of a representative of the element  
 $\psi(w) \in D$ with respect to
 the FA--presentation 
 $\psi_0 : L_0  \rightarrow D$.   
  
 \begin{definition}
 \label{comprate}
 	For a given FA--presentation 
 	$\psi : L \rightarrow D$
 	of the structure $\mathcal{A}$ 	
 	let $s: \mathbb{N} \rightarrow \mathbb{N}$
 	be a function defined as follows.  
    For a given 
 	$n \in \mathbb{N}$, 
 	if $L^{\leqslant n} = \varnothing$, then 
 	$s(n) = 0$ and, if  $L^{\leqslant n} \neq \varnothing$, then 
 	$s(n) = 
 	\max \{ \xi(w) \, : \, w \in  L^{\leqslant n}\}$.      
 \end{definition}	       
 
 For infinitely many 
 $n$ the quotient 
 $\frac{s(n)}{n}$ is a compression ratio achieved
 for some strings in $L_0$.    
 We will call the function $s(n)$ 
 {\it compressibility rate} of the FA--presentation 
 $\psi : L \rightarrow D$ relative to 
 the FA--presentation 
 $\psi_0 : L_0 \rightarrow D$.    
 
 Let $\Sigma_0 = \{0\}$ be a unary 
 alphabet and 
 $u_0: \Sigma_0 ^* \rightarrow \mathbb{N}$
 be a  unary 
 FA--presentation of the structure 
 $(\mathbb{N};\mathrm{S})$ which sends 
 a string over the alphabet $\Sigma_0$ 
 to its length.   
 Theorem \ref{r_grows_faster_than_T} implies 
 that there exists a FA--presentation 
 of the structure $(\mathbb{N};\mathrm{S})$ 
 for which the compressibility rate 
 relative to the FA--presentation 
 $u_0$ is greater than or
 equal to $T(n)$. In particular, it grows 
 faster than any tower of exponents of 
 a fixed height, see Corollary \ref{r_grows_faster_t_h_cor}.  
 In Sections \ref{compr_config_tm_sec} and
 \ref{compr_elem_of_groups} we provide more 
 examples of FA--presentable structures and 
 their FA--presentations for which compressibility 
 rate grows faster than any tower of exponents.   
 
 However, not  every 
 FA--presentation admits compression.  
 We will say that a FA--representation 
 $\psi_0 : L_0 \rightarrow D$ is 
 {\it incompressible}
 if for every  FA--presentation 
 $\psi: L \rightarrow D$ of 
 the structure $\mathcal{A}$ the 
 compressibility rate 
 $s(n)$ is bounded from above by a linear 
 function $cn$ for some constant $c>0$ 
 which depends on the FA--presentation 
 $\psi : L \rightarrow D$.
 \begin{theorem} 
 \label{incomp_thm}	
 	Every FA--presentation 
 	of the structure $(\mathbb{N};+)$ is incompressible. 
 \end{theorem}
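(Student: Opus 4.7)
The plan is to reduce the statement to a single combinatorial claim: the length $\ell_0(m)$ of the shortest $\psi_0$--preimage of $m \in \mathbb{N}$ satisfies $\ell_0(m) = O(\log m)$. Once this is available, for any $w \in L^{\leqslant n}$ Lemma \ref{N_plus_exp_growth} applied to $\psi$ itself gives $\psi(w) \leqslant \sigma^{n}$ for some constant $\sigma$ depending only on $\psi$, so $\xi(w) = \ell_0(\psi(w)) = O(\log \sigma^{n}) = O(n)$, which is exactly the assertion that $s(n)$ is bounded by a linear function.

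To establish the claim, I would first reduce to the case where $\psi_0$ is bijective, using the same length--lexicographic trick as in the proof of Lemma \ref{N_plus_exp_growth}; this reduction preserves $\ell_0(m)$ because the length--lex minimal representative of $m$ is, in particular, among the shortest. The heart of the argument is then a pumping--type lemma: there is a constant $c>0$ (equal to the number of states of the three--tape automaton recognizing addition) such that whenever $\psi_0(u) + \psi_0(v) = \psi_0(w)$ with $u,v,w \in L_0$, we have $|w| \leqslant \max\{|u|,|v|\} + c$. The idea is to pump the tail of the convolution: if $|w|$ exceeds $\max\{|u|,|v|\}$ by more than $c$, then while reading the last $|w|-\max\{|u|,|v|\}$ positions of $u \otimes v \otimes w$ the automaton sees only letters of the form $(\diamond,\diamond,w_i)$ and, by pigeonhole, revisits some state. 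Cutting out the induced loop produces an accepted convolution $u \otimes v \otimes w'$ with $|w'|<|w|$ and $\psi_0(u)+\psi_0(v)=\psi_0(w')$; bijectivity of $\psi_0$ then forces $w=w'$, a contradiction.

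Specialising to $u=v$ gives the doubling estimate $\ell_0(2m)\leqslant \ell_0(m)+c$, so fixing a preimage $v_1$ of $1$ and iterating yields $\ell_0(2^k)\leqslant |v_1|+kc$. For an arbitrary $m$, I would use the binary expansion $m=\sum_{i=1}^{r} 2^{k_i}$ with $r \leqslant \log_2 m+1$ and $k_i \leqslant \log_2 m$, and successively add preimages of the powers $2^{k_i}$. By the lemma above, each such addition inflates the maximum representative length by at most $c$, producing a $\psi_0$--preimage of $m$ of length $O(\log m)$.

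The main obstacle is the pumping step: one must arrange that the loop is cut strictly inside the $\diamond$--padded tail of $u \otimes v \otimes w$ so that the first two coordinates $u$ and $v$ are not altered, and it is crucial that bijectivity of $\psi_0$ (not merely FA--presentability) be invoked, so that the shortened $w'$ is forced to coincide with $w$ rather than merely represent the same number. With this lemma in hand, the remaining chain $s(n) \leqslant \ell_0(\psi(w)) \leqslant C\log \sigma^{n}+C' = O(n)$ is routine.
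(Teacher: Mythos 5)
Your proposal is correct and follows the same two--stage strategy as the paper's proof: first use Lemma \ref{N_plus_exp_growth} to get $\psi(w) \leqslant \sigma^{|w|}$ for some constant $\sigma$, then show that the shortest $\psi_0$--representative of an integer $m$ has length $O(\log m)$. Where you diverge is in the second stage. You establish the one--sided pumping bound $|w| \leqslant \max\{|u|,|v|\} + c$ whenever $\psi_0(u)+\psi_0(v)=\psi_0(w)$ by pumping the padded tail and invoking bijectivity; specialising to $u=v$ gives the doubling estimate, and you then reach a general $m$ by successively adding representatives of the powers of two in its binary expansion, reusing the same pumping bound at each step. The paper instead bounds the length of the representative $v_k$ of $2^k$ by pumping the two--tape doubling relation $R_0' = \{(u,w) : 2\psi_0(u) = \psi_0(w)\}$ and then appeals to the FA--recognizability of the order relation $\leqslant$ (first--order definable in $(\mathbb{N};+)$) together with one more pumping argument to obtain the monotonicity statement that $\psi_0(v) \leqslant \psi_0(u)$ forces $|v| \leqslant |u| + O(1)$; that monotonicity step replaces your binary--decomposition bookkeeping outright, bounding the shortest representative of any $m \leqslant 2^n$ by $|v_n|+O(1)$ directly. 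Both routes are sound. Your version is a bit more self--contained, using only the addition relation and never invoking definability of $\leqslant$; the paper's version is a bit shorter because the monotonicity lemma sidesteps the step--by--step assembly along the binary expansion.
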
	
 \begin{proof}
    Let $\psi_0 : L_0 \rightarrow 
         \mathbb{N}$ and 
         $\psi: L \rightarrow \mathbb{N}$            
    be FA--presentations 
    of the structure $(\mathbb{N};+)$. 
    Similarly to the proof of Lemma
    \ref{N_plus_exp_growth}, without 
    loss of generality, we can assume that 
    both FA--presentations $\psi_0$ and 
    $\psi$ are bijective. Then the function 
    $s(n)$ can be defined in a more simple way:
    $s(n)= \max \{|v|\, : \,\psi_0 (v) = \psi(w), 
    w \in L ^{\leqslant n} \}$.
    
    Now we notice that there exist constants 
    $c_0,d_0 >0$ such that 
    the inequality $\psi_0 (v) \leqslant 2^n$ implies
    that $|v| \leqslant c_0 n + d_0$ for all 
    $n \in \mathbb{N}$.  
    To see this, let $v_k \in L_0, k = 0,1,2, 
    \dots$  be the representative 
    of $2^k$ with respect to $\psi_0$: 
    $\psi_0(v_k) = 2^k$. 
    Since the relation $R_0 =\{(u,v,w) \in L_0 ^3\, : \,
    \psi_0(u) + \psi_0 (v) = \psi_0 (w)\} \subseteq 
    \Sigma_\diamond ^{3*}$ is $3$--tape 
    FA--recognizable, the relation 
    $R_0 ' = \{ (u,w) \in L_0 ^2 \,|\, 2\psi_0(u) = \psi_0(w) \}$ 
    is $2$--tape FA--recognizable. 
    Therefore, there exists a finite automaton 
    $\mathcal{M}$ recognizing the language 
    $\otimes  R_0' = \{ u \otimes w \, | \, 
     2 \psi_0 (u) = \psi_0 (w) \}$.
    Let $c_0$ be the number of states in $\mathcal{M}$. 
    If $|v_{k+1}| - |v_k | > c_0$, then 
    by the same argument as in the pumping lemma 
    there exist $x,y ,z\in \Sigma_\diamond ^{2*}$ 
    for which $v_k \otimes v_{k+1} = xyz$, $|x| \geqslant |v_k|$ 
    and $|y| \leqslant c_0$  such that every string 
    $xy^nz$, $n \geqslant 0$ is in the 
    language $\otimes R_0'$. This implies that 
    there are infinitely many $v' \in L_0$ for which 
    $2\psi_0 (v_k) = \psi_0 (v')$. 
    Since $\psi_0 : L_0 \rightarrow \mathbb{N}$ is bijective, 
    we get a contradiction. Therefore, 
    $|v_{k+1}| - |v_k| \leqslant c_0$.      
    Let $d_0' = |v_0|$. Then we have 
    $|v_k| \leqslant c_0 k + d_0'$ for all $k$. 
    The relation $\leqslant$ is first--order
    definable in $(\mathbb{N};+)$, so it
    is FA--recognizable. 
    Again, by using the pumping lemma argument one can show that 
    if $\psi_0(v) \leqslant \psi_0 (u)$, then 
    $|v| \leqslant |u| + d_0''$ for some constant
    $d_0''$. Therefore, if 
    $\psi_0 (v) \leqslant 2^n$, then
    $|v| \leqslant c_0 n + d_0' + d_0'' = 
    c_0 n + d_0$, where $d_0 = d_0' + d_0''$.       
    
    By Lemma \ref{N_plus_exp_growth}, 
    there exists a constant 
    $\sigma > 0$ for which 
    the function 
    $r(n) = \max\{\psi(w)\, : \, 
     w \in  L^{\leqslant n} \}$ 
    is less than or equal to 
    $\sigma^n$: 
    $r(n) \leqslant \sigma^n$.   
    Therefore, $r(n) \leqslant \sigma^n \leqslant 2^{\lceil \log_2 \sigma \rceil n}$. This implies that  
    $s(n) \leqslant c_0 \lceil \log_2 \sigma \rceil n + d_0$. 
    Let $c = c_0 \lceil \log_2 \sigma \rceil +1$. Then 
    finally we have 
    $s(n) \leqslant cn$.        
 \end{proof}	 
    
 \section{Compressing Configurations of a One--Tape  Turing Machine} 
 \label{compr_config_tm_sec}
 
 In this section we consider 
 a FA--presentable structure 
 defined by the set of all 
 possible configurations of 
 a one--tape Turing machine. 
 A standard encoding of 
 these configurations 
 gives a FA--presentation
 of this structure. We will 
 show that there exists 
 another FA--presentation 
 of the same structure   (encoding of configurations of a Turing machine) 
 for which the compressibility rate $s(n)$ relative to the standard encoding is greater than or equal to $T(n)$.

 Let $\Gamma$ be a finite set of symbols of cardinality at least two 
 which contains a blank symbol $\sqcup$ and $Q$ be a finite set of states
 containing a distinguished symbol $q_0 \in Q$; it is assumed that 
 $\Gamma \cap Q = \varnothing$. A deterministic one--tape Turing machine 
 $M$ over the alphabet $\Gamma$ with a set of states $Q$ and the 
 initial state $q_0$ is defined by the set of  
 commands $P_M$.    
 A configuration (instantaneous description) of $M$ is
 a string $X_1 \dots X_{i-1}q X_i X_{i+1} \dots X_n$, where 
 $X_1 \dots X_n \in \Gamma^*$ is the content written on the tape and  
 $q \in Q$ with the head pointing at $X_i$. 
 This way to present configurations 
 is standard regardless whether the tape is infinite or 
 semi--infinite, see, e.g., \cite{HopcroftUllman}; in the latter 
 case $X_1$ is the content of the leftmost cell. 
 We denote by $\mathcal{C}_{\Gamma,Q} \subseteq 
 (\Gamma \cup Q)^*$ the language 
 of configurations for all possible Turing machines 
 over the alphabet $\Gamma$ and a set of states $Q$. Clearly, 
 the language $\mathcal{C}_{\Gamma,Q}$ is regular.  
 Furthermore, the relation
 \begin{equation*}
 	R_{M} = \{ (\alpha,\beta) \in \mathcal{C}_{\Gamma,Q} 
 	\times \mathcal{C}_{\Gamma,Q} \, : \, \mathrm{there\,\,exists\,\,a\,\,  	 command\,\,in\,\,} P_M \mathrm{\,\,transforming\,\,}\alpha 
 	\mathrm{\,\,to\,\,}\beta\} 
 \end{equation*}	 
 is FA--recognizable for every Turing machine 
 $M$ over the alphabet $\Gamma$ with the set of states $Q$ \cite{KhoussainovNerode95}. 
 So the structure 
 $(\mathcal{C}_{\Gamma,Q}; R_M)$ 
 is FA--presentable and the 
 identity map 
 $\psi_0 : \mathcal{C}_{\Gamma,Q} 
 \rightarrow 
 \mathcal{C}_{\Gamma,Q}$ is a 
 FA--presentation.

 We construct a new FA--presentation
 $\psi : L \rightarrow \mathcal{C}_{\Gamma,Q}$ of 
 the structure 
 $(\mathcal{C}_{\Gamma,Q}; R_M)$ 
 as follows. 
 Let $\gamma$ be a nonblank symbol from $\Gamma$.  
 Any configuration $\xi \in \mathcal{C}_{\Gamma,Q}$ 
 can be written as a concatenation $\xi = \gamma^k \mu$ of strings 
 $\gamma^k$ for some $k \in \mathbb{N}$ and $\mu$, where the first 
 symbol of $\mu$ is not $\gamma$.  
 Now let $u_k$ be the string representing $k$
 with respect to the FA--presentation  of 
 $(\mathbb{N};\mathrm{S})$ constructed in Section \ref{compr_nat_numbers_sec};
 it is assumed 
 that $\Gamma$ does not contain 
 any symbol from 
 the alphabet of this FA--presentation.     
 We encode the configuration $\xi$ by a string
 $w = u_k \mu$ which is the concatenation of strings $u_k$ and $\mu$. 
 Let $L$ be the collection of all such strings $w$ encoding 
 all possible configurations from $\mathcal{C}_{\Gamma,Q}$.  
 Clearly, a mapping 
 $\psi : L \rightarrow \mathcal{C}_{\Gamma,Q}$ 
 which sends a string $w$ to the configuration $\xi$ is a bijection. 
 Moreover, for this mapping $\psi$ and each Turing machine
 $M$ over the alphabet $\Gamma$ with the set of states $Q$ 
 the relation $\widetilde{R_M}$ is FA--recognizable. 
 
 \begin{theorem} 
 	\label{s_grows_faster_than_T}
 	The compressibility rate $s(n)$ of $\psi$ relative 
 	to $\psi_0$
 	is greater than or equal to $T(n)$. 
 \end{theorem}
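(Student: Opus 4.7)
I would prove this by transporting the construction behind Theorem~\ref{r_grows_faster_than_T} into $\mathcal{C}_{\Gamma,Q}$ through the definition of $\psi$. The key observation is that a configuration of the shape $\gamma^k\mu_0$, for some fixed short tail $\mu_0$, has standard length $k+O(1)$ while its new encoding is $u_k\mu_0$, whose length is controlled by the compressibility of $k$ in the FA--presentation of $(\mathbb{N};\mathrm{S})$ built in Section~\ref{compr_nat_numbers_sec}.

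Concretely, I would fix a short configuration $\mu_0 \in \mathcal{C}_{\Gamma,Q}$ whose first symbol is not $\gamma$, for instance $\mu_0 = q_0\sqcup$, and put $c=|\mu_0|$. For each sufficiently large $n$, the proof of Theorem~\ref{r_grows_faster_than_T} produces a string $u^\ast$ in the compressed FA--presentation of $(\mathbb{N};\mathrm{S})$ of length at most $n-c$ representing an integer $k$ with $k \geq T(2^{\,n-c-1}-1)$ (the bound actually established in the proof, stronger than the stated $r\geq T$). Form $w = u^\ast\mu_0$. Since the alphabet of the compressed presentation of $\mathbb{N}$ is disjoint from $\Gamma\cup Q$ and the first symbol of $\mu_0$ lies in $(\Gamma\cup Q)\setminus\{\gamma\}$, the word $w$ parses uniquely as an integer-encoding followed by a configuration whose leading letter is not $\gamma$, so $w\in L$ and $\psi(w) = \gamma^{k}\mu_0$.

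Because $\psi_0$ is the identity on $\mathcal{C}_{\Gamma,Q}$, we have $\xi(w) = |\psi(w)| = k+c$. From $|w|\leq n$ it then follows that
\[
s(n) \;\geq\; \xi(w) \;\geq\; k \;\geq\; T\!\left(2^{\,n-c-1}-1\right).
\]
For all $n$ large enough, $2^{\,n-c-1}-1 \geq n$, so monotonicity of $T$ yields $s(n) \geq T(n)$, which is exactly $s \geq T$ in the sense of the paper.

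The argument is essentially a transport of an earlier result, so there is no genuinely new combinatorial content; the only step requiring a bit of care is to invoke the sharper bound $r(m)\geq T(2^{m-1}-1)$ (present inside the proof of Theorem~\ref{r_grows_faster_than_T}, though only $r\geq T$ is stated) in order to absorb the additive constant $c$ coming from the tail $\mu_0$. This is the main obstacle, and it is a minor bookkeeping obstacle rather than a substantive one.
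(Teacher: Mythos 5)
Your proposal is correct and essentially identical to the paper's proof: both choose a two-symbol tail of the form $q\sqcup$ (you pick $q=q_0$, the paper any $q\in Q$), both observe that $\psi_0$ being the identity makes $\xi(w)=|\psi(w)|=k+2$, and both crucially invoke the sharper internal bound $r(n)\geq T(2^{n-1}-1)$ from the proof of Theorem~\ref{r_grows_faster_than_T} to absorb the additive constant from the tail. The only cosmetic difference is bookkeeping: you shorten the integer-encoding to length $n-c$ up front, whereas the paper writes $s(n+2)\geq r(n)+2$ and then shifts indices at the end.
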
	
 \begin{proof} 
 	First we notice that 
 	since $\psi_0$ is
 	the identity map and 
 	$\psi$ is bijective, 
 	the compressibility rate $s(n)$ of $\psi$ relative 
 	to $\psi_0$
 	takes a form:
 	$s(n) = \max \{ |\psi(w)| \, : \,w \in L^{\leqslant n} \}$
 	if   
 	$L ^ {\leqslant n} \neq \varnothing$ 
 	and $s(n)=0$ if $L^{\leqslant n} = \varnothing$. 
 	Let $q \in Q$. 
 	For a given $k \geqslant 0$ we denote by $\xi_k$ the configuration  
 	$\xi_k = \gamma^kq \sqcup $. 
 	Let $w_k = u_k q \sqcup \in L$ be the string representing $\xi_k$ with 
 	respect to $\psi$, i.e., $\psi(w_k) = \xi_k$.
 	We have: $|\xi_k| = k+2$ and $|w_k|= |u_k|+2$.
 	Therefore, $s(n+2) \geqslant r(n)  + 2$ for all $n > 0$.    
 	Now we note that in Theorem \ref{r_grows_faster_than_T} 
 	we actually proved a stronger inequality: $r(n) \geqslant T(2^{n-1}-1)$ 
 	for all $n>2$. Therefore,
 	$r(n) + 2 \geqslant T(2^{n-1}-1) + 2 \geqslant T(n+2)$ for all 
 	$n >3$; the latter inequality follows from a simple observation 
 	that $2^{n-1}-1>n+2$ for all $n>3$. Thus, $s(n)  
 	\geqslant r(n-2) +2 \geqslant T(n)$ for all $n>5$.   
 \end{proof}	
 
 \begin{remark}
 	Each bijective 
 	FA--presentation 
 	$\psi : L \rightarrow \mathcal{C}_{\Gamma,Q}$ 
 	of the structure 
 	$(\mathcal{C}_{\Gamma,Q}; R_M)$
 	defines an encoding of 
 	configurations of a Turing machine $M$ by strings from the language $L$. 
 	Moreover, if for strings $u \in L$ and  $v \in L$ encoding configurations 
 	$\alpha = \psi (u)$ and $\beta  = \psi(v)$, respectively, there exists a command 
 	in $P_M$ transforming $\alpha$ to $\beta$, the string $v$ 
 	can be computed on some deterministic one--tape position--faithful Turing machine (see \cite{LMCS13} for the formal definition of a one--tape position--faithful Turing machine) 
 	from the input string $u$ in linear time. This is because 
 	being an automatic function is equivalent to being one  
 	computed on a deterministic one--tape position--faithful 
 	Turing machine in linear time \cite{LMCS13}. 	
 \end{remark}

\section{Compressing Elements in Cayley Automatic Groups}
\label{compr_elem_of_groups} 

In this section we consider 
FA--presentations of 
Cayley graphs for Cayley 
automatic groups. 
Such FA--presentations 
are referred to as Cayley automatic representations. 
The groups $G$ considered in this section are  
free abelian groups, free groups, Baumslag--Solitar 
groups and semidirect products.
We start with
fixing some known FA--presentations
$\psi_0: L_0 \rightarrow G$ of 
these groups. 
Then we construct new 
FA--presentations 
$\psi : L \rightarrow G$ 
for which the compressibility  
rate $s(n)$ relative to $\psi_0$ is greater than or equal to $T(n)$. 
That is, we show
the result analogous to 
Theorem \ref{s_grows_faster_than_T}. 
All FA--presentations that we consider in this section 
are bijective, so the 
compressibility rate takes
a form: $s(n) = \max \{|v| \, : \,
\psi_0 (v) = \psi(w), w \in L^{\leqslant n}\}$ 
if $L ^ {\leqslant n} \neq \varnothing$
and $s(n)=0$ if $L^{\leqslant n} = \varnothing$.

Throughout this section  
for a given integer $k \geqslant 0$  
we will denote by $u_k$ the string representing 
the integer $k$ with respect to the FA--presentation 
of $(\mathbb{N};\mathrm{S})$ constructed 
in Section \ref{compr_nat_numbers_sec}.

\subsection{Free Abelian Groups}

We first consider a natural
Cayley automatic
representation of the infinite cyclic  
group $Z = \langle a \rangle$ 
defined as follows.
Let $\Sigma_0 = \{a,a^{-1}\}$ and 
$L_0 = \{ a^k \, : \, k \in \mathbb{Z}\} 
\subseteq \Sigma_0^*$. 
We define $\psi_0 : L_0 \rightarrow Z$ 
to be a map sending  
a string $a^k \in L_0$ to the group 
element $a^k \in Z$. 
Now we define $L$ to be a language 
consisting of the strings
$a^{k}$ for  $k<0$ and $u_k$ for  
$k \geqslant 0$. 
Let $\psi : L \rightarrow Z$ be a map 
which sends a string $a^k \in L$ for $k<0$ 
and a string $u_k \in L$ for $k \geqslant 0$
to the group element $a^k \in Z$.      
It can be seen that 
the mapping $\psi : L \rightarrow Z$ 
is a Cayley automatic representation.  
Let $s(n)$ 
be the compressibility rate 
of 
$\psi : L \rightarrow Z$ 
relative to $\psi_0: L_0 \rightarrow Z$.  
The inequality $s(n) \geqslant T(n)$ 
immediately follows from  
Theorem \ref{r_grows_faster_than_T} 
as $\psi_0 (a^k) = \psi(u_k)$ and $|a^k|=k$
for all $k \geqslant 0$. 

Now let us consider a general case -- a free abelian group 
$Z^m = 
\langle a_1,\dots,a_m \, | \, [a_i,a_j]=e, 
i \neq j \rangle$. 
Let $\Sigma_0 = \{a_1, a_1^{-1},\dots,a_m,a_m ^{-1}\}$, 
$L_0 = \{a_1^{k_1} \dots a_m ^{k_m}\, : \,
k_i \in \mathbb{Z} , i =1,\dots, m \}$ 
and $\psi_0 : L_0 \rightarrow Z^m$ 
be a Cayley automatic representation of 
the group $Z^m$ sending a 
string $a_1^{k_1} \dots a_m ^{k_m} \in L_0$ 
to the group element 
$a_1^{k_1} \dots a_m ^{k_m} \in Z^m$. 
We define $L$ to be a language consisting 
of the strings 
$a_1^{k_1} a_2^{k_2} \dots a_m ^{k_m}$ for 
$k_1 <0$ and 
$u_{k_1} a_2 ^{k_2} \dots a_m ^{k_m}$ 
for $k_1 \geqslant 0$.
Let $\psi : L \rightarrow Z^m$ be 
a map which sends a string 
$a_1^{k_1} a_2 ^{k_2} \dots a_m ^{k_m} \in L$
for $k_1 <0$ and a string 
$u_{k_1} a_2^{k_2} \dots a_m^{k_m} \in L$ for 
$k_1 \geqslant 0$ to the group 
element $a_1^{k_1} a_2^{k_2} \dots a_m ^{k_m} \in Z^m$.
The mapping $\psi : L \rightarrow Z^m$ is a Cayley automatic representation. 
Similarly, 
$s(n) \geqslant T(n)$ for 
the compressibility rate 
of $\psi$ relative to $\psi_0$.

\subsection{Free Groups} 

In this part we consider 
a free group over $m$ generators   
$\mathbb{F}_m = 
\langle a_1, \dots, a_m \rangle$. 
Recall that $\mathbb{F}_m$ 
as a set consists of 
all reduced words over the alphabet
$\{a_1,a_1^{-1}, \dots, 
a_m,a_m^{-1}\}$.    
Let 
$\Sigma_0 = 
\{a_1,a_1^{-1}, \dots, 
a_m,a_m^{-1}\}$ and 
$L_0 \subseteq \Sigma_0 ^*$ be 
the language of reduced words 
over $\Sigma_0$.   
We define 
$\psi_0 : L_0 \rightarrow 
\mathbb{F}_m$ 
to be a Cayley automatic representation 
identifying a reduced word from $L_0$ 
with the corresponding element 
in $\mathbb{F}_m$.    
Each reduced word $w \in L_0$
can be written as a concatenation 
$w = a_1 ^{k} w'$, where 
$k \in \mathbb{Z}$, $w' \in L_0$ and 
the first symbol of $w'$ is not $a_1$ or $a_1^{-1}$. 
Now we define $L$ to be 
the language consisting of all concatenations 
$a_1^k w'$ for $k<0$ and $u_k w'$ for 
$k \geqslant 0$. 
Let $\psi : L \rightarrow \mathbb{F}_m$ 
be a map which sends a string 
$a_1^k w' \in L$ for $k<0$ and 
$u_k w' \in L$ for $k \geqslant 0$ 
to the group element  $w = a_1 ^{k} w' \in \mathbb{F}_m$. 
Clearly, $\psi : L \rightarrow \mathbb{F}_m$ 
is a Cayley automatic representation. 
As $\psi_0 (a_1 ^k) = \psi (u_k)$ and 
$|a_1 ^k| =k$ for all $k \geqslant 0$,
the inequality $s(n) \geqslant T(n)$ for the compressibility 
rate $s(n)$ of $\psi$ relative to 
$\psi_0$  is a 
straightforward corollary of Theorem 
\ref{r_grows_faster_than_T}. 

\subsection{Baumslag--Solitar Groups}

In this part we consider the family of Baumslag--Solitar 
groups  
$BS(p,q) = \langle a,t \, :  \, t a^p t^{-1}  = a^q \rangle$
for $1 \leqslant p < q$.  
Recall that each group element 
$g \in BS(p,q)$ can be uniquely 
written 
as a reduced word 
$w_\ell t^{\varepsilon_\ell} \dots
w_1 t^{\varepsilon_1} a^m$, 
where $\varepsilon_i \in \{+1,-1\}$,  
$w_i \in \{\varepsilon, a, \dots, a^{p-1}\}$  
if $\varepsilon_i  = -1$, 
$w_i \in  \{\varepsilon, a, \dots, a^{q-1}\}$ 
if $\varepsilon_i = +1$ and $m \in \mathbb{Z}$.
The reader can look up a  general  result about normal forms in HNN extensions of groups in, e.g, \cite{LyndonSchuppbook}. 
We represent an element 
$g = w_\ell t^{\varepsilon_\ell} \dots
w_1 t^{\varepsilon_1} a^m$ as a concatenation 
of a string 
$\widetilde{w}= w_\ell 
t^{\varepsilon_\ell} \dots
w_1 t^{\varepsilon_1}$ and a string $z$, which 
is a $q$--ary representation
of an integer $m$. 
Let $L_0$ be the language of 
all such concatenations 
$u = \widetilde{w}z$ and 
$\psi_0 : L_0 \rightarrow BS(p,q)$ 
be a bijection which sends 
a string $u  = \widetilde{w}z \in L_0$ to a group element 
$g = \widetilde{w} a^m \in BS(p,q)$. 
This bijection 
$\psi_0 : L_0 \rightarrow BS(p,q)$ is a Cayley 
automatic representation of the 
group $BS(p,q)$ \cite{dlt14}. 
We denote by $\tau$ the maximal prefix 
of the string $u$ which 
is of the form $\tau = t^k$ for 
$k \geqslant 0$. That is, 
$u = \tau \omega = t^k \omega$ 
and the first 
symbol of the suffix $\omega$ 
is not $t$.   
Now we define $L$ to be the language 
consisting of all concatenations  
$u_k \omega$.  
Let $\psi : L \rightarrow BS(p,q)$ 
be a map which sends a string 
$u_k \omega \in L$ to a group element
$g = \widetilde{w} a^m \in BS(p,q)$. 
It can verified that 
$\psi : L \rightarrow BS(p,q)$ 
is also a Cayley automatic representation. 
As $\psi_0 (t^k) = \psi(u_k)$ and 
$|t^k|=k$ for all $k \geqslant 0$, the inequality $s(n) \geqslant T(n)$ for the compressibility 
rate $s(n)$ of $\psi$ relative to 
$\psi_0$ follows from Theorem 
\ref{r_grows_faster_than_T}.

\subsection{Semidirect Products 
	$Z^2 \rtimes_A Z$}  

In this part we consider a 
family of semidirect products 
$Z^2 \rtimes_A Z$ 
for $A \in \mathrm{GL}(2,\mathbb{Z})$.  
Let us consider any  
FA--presentation 
$\psi' : L' \rightarrow Z^2$
of the structure $(Z^2;f_A)$, 
where 
$f_A : Z^2 \rightarrow 
Z^2$ is an 
automorphism  
mapping 
${z_1 \choose z_2} \in Z^2$ to 
$A {z_1 \choose z_2} \in Z^2$. 
Let $a$ be a generator of the 
subgroup $Z \leqslant 
Z^2 \rtimes_A Z$.
We denote by $L_0$ the language 
of all concatenations 
$a^k v$ for $k \in \mathbb{Z}$ and 
$v \in L'$; it is assumed that the 
alphabet of $L'$ does not contain  
the symbols $a$ and $a^{-1}$.   
Let $\psi_0 : L_0 \rightarrow 
Z^2 \rtimes_A Z$
be a bijection which sends a 
string $a^k v$ to the group 
element 
$\left(a^k,{z_1 \choose z_2} \right) \in 
Z^2 \rtimes_A Z$, 
where ${z_1 \choose z_2} = \psi'(v)$. 
This bijection  
$\psi_0 : L_0 \rightarrow 
Z^2 \rtimes_A Z$ 
is a Cayley automatic representation 
of the group 
$Z^2 \rtimes_A Z$;
see \cite{BK2020} where 
such 
Cayley automatic representations  
are used. We define $L$ to be a 
language consisting of all 
concatenations $a^k v$ for 
$k<0$ and $u_k v$ for $k \geqslant 0$. 
Let $\psi : L \rightarrow Z^2 \rtimes_A Z$ be a map 
which sends a string $a^kv \in L$ for 
$k <0$ and $u_k v \in L$ 
for $k \geqslant 0$ 
to the group element 
$\left(a^k,{z_1 \choose z_2} \right) 
\in Z^2 \rtimes_A Z$, 
where ${z_1 \choose z_2} = \psi'(v)$.  
Let us additionally assume that the empty string $\varepsilon \in L'$; if 
$\varepsilon \notin L'$ one can 
always change any element of $L'$ to 
the empty string $\varepsilon$ -- this 
will give a new FA--presentation 
of the structure $(Z^2;f_A)$.    
As $\psi_0 (a^k) = \psi(u_k)$ and $|a^k|=k$
for all $k \geqslant 0$, 
the inequality $s(n) \geqslant T(n)$ for the compressibility 
rate $s(n)$ of $\psi$ relative to 
$\psi_0$ follows from Theorem 
\ref{r_grows_faster_than_T}. 

   \section{Conclusion and Open Questions}  
  \label{conc_sec}
  The key result of this paper 
  is a construction of 
  a FA--presentation of the structure 
  $(\mathbb{N}; \mathrm{S})$ 
  such that for every $n \geqslant 0$
  there is a string of length at most $n$
  from the domain of this FA--presentation 
  which encodes an integer that is
  greater than or equal to $T(n)$, 
  where $T(n)$ is defined recursively 
  by the identities $T(n+1)=2^{T(n)}$ and $T(0)=1$. In particular,
  $T(n)$ grows faster than any tower of exponents of a fixed height. 
  This result naturally leads to the notion of a compressibility 
  rate defined for a pair  of FA--presentations for 
  any FA--presentable structure.  We show examples  
  when this compressibility rate grows at least as 
  fast as $T(n)$. We show that for FA--presentations 
  of the Presburger arithmetic 
  $(\mathbb{N};+)$ it is  
  bounded by a linear function.
  We leave the following questions for future consideration.  
  \begin{itemize} 
  	\item{Is it true that the compressibility rate 
  		for FA--presentations of the structure 
  		$(\mathbb{Z};+)$ is always bounded from 
  		above by a linear function?}
  	\item{Is it true that for every FA--presentation 
  		$\psi_0$ of $(\mathbb{N};\mathrm{S})$ there 
  		exists a FA--presentation $\psi$ for which the 
  		compressibility rate of $\psi$ relative 
  		to $\psi_0$ is  bounded from below 
  		by the function $T(n)$?} 
  	\item{The notion of a compressibility rate is 
  		valid for semiautomatic structures 
  		\cite{SemiautomaticStrCSR,SemiautomaticStrJournal}.
  		Is it true that for semiautomatic presentations
  		of the Presburger arithmetic 
  		$(\mathbb{N};+)$
  		the compressibility 
  		rate is bounded from above by a linear function?} 
  \end{itemize}

  \section*{Acknowledgement} 
  The authors would like to thank the anonymous reviewer 
  for comments. Dmitry 
  Berdinsky thanks Murray Elder for useful discussions.      
  The authors thank the Institute for Mathematical Sciences 
  at the National University of Singapore for support. 
  Sanjay Jain's research was partially supported by 
  the National University of Singapore grant E-252-00-0021-01. Bakhadyr Khoussainov's research was partially funded by the NSFC China, the grant number 62172077. 
  Sanjay Jain and Frank Stephan were partially 
  supported by the Singapore Ministry of
  Education Academic Research Fund Tier 2 grant
  MOE2019-T2-2-121 / R-146-000-304-112.

\bibliographystyle{plain}

\bibliography{tmconfig_bibliography}

\end{document}